\newcommand{\keywords}[1]{\par\addvspace\baselineskip
\noindent\keywordname\enspace\ignorespaces#1}
\newcommand{\Z}{\mathbb{Z}}
\newcommand{\N}{\mathbb{N}}
\newcommand{\D}{\mathcal{S}}
\newcommand{\F}{\mathcal{F}}
\newcommand{\Ls}{\mathcal{L}}
\newcommand{\Locc}{\mathcal{L}_{\mathrm{occ}}}
\newcommand{\Ts}{\mathcal{T}}
\begin{document}

\mainmatter

\title{A One-Dimensional Physically Universal Cellular Automaton\thanks{Research supported by the Academy of Finland Grant 131558}}

\author{
Ville Salo\inst{1}\thanks{The author was partially supported by CONICYT Proyecto Anillo ACT 1103}
\and
Ilkka T\"orm\"a\inst{2}
}

\institute{
		Center for Mathematical Modeling, \\
		University of Chile \\
\and
		TUCS -- Turku Centre for Computer Science, \\
		University of Turku, Finland \\
		\mails
}

\maketitle

\begin{abstract}
Physical universality of a cellular automaton was defined by Janzing in 2010 as the ability to implement an arbitrary transformation of spatial patterns. In 2014, Schaeffer gave a construction of a two-dimensional physically universal cellular automaton. We construct a one-dimensional version of the automaton.
\keywords{cellular automaton, physical universality, reversibility}
\end{abstract}

\section{Introduction}

A cellular automaton (CA) is a finite or infinite lattice of deterministic finite state machines with identical interaction rules, which, at discrete time steps, update their states simultaneously based on those of their neighbors. They are an idealized model of massively parallel computation. From another point of view, the local updates can be seen as particle interactions, and the CA is then a kind of physical law, or dynamics, governing the universe of all state configurations.

We study the notion of \emph{physical universality} of cellular automata, introduced by Janzing in \cite{Ja10}, which combines the two viewpoints in a novel and interesting way. Intuitively, a cellular automaton is physically universal if, given a finite subset $D$ of the lattice and a function $h$ on the shape-$D$ patterns over the states of the CA, one can build a `machine' in the universe of the CA that, under the dynamics of the CA, decomposes any given pattern $P$ and replaces it by $h(P)$.

A crucial point in this definition is that we need to perform arbitrary computation on all patterns, not only carefully constructed ones. This has quite serious implications: The machine $M$ that takes apart an arbitrary pattern of shape $D$ and replaces it by its image under an arbitrary function $h$ is not in any way special, and in particular we are not allowed to have separate `machine states' and `data states' with the former operating on the latter. Instead, we can also think of $M$ as a pattern, and must be able to construct a larger machine $M'$ that takes $M$ apart and reassembles it in an arbitrary way.

This notion differs essentially from most existing notions of universality for CA such as \emph{intrinsic universality} \cite{Ol03}, universality in terms of traces as discussed by Delvenne et al. in \cite{DeKuBl06}, and the more well-known I-know-it-when-I-see-it type of computational universality promoted by Wolfram in \cite{Wo02}. In these notions, one can usually implement the computations and simulations in a well-behaved subset of configurations. Physical universality bears more resemblance to the \emph{universal constructor machines} of Von Neumann \cite{Ne66}, which construct copies of themselves under the dynamics of a particular cellular automaton, and were the initial motivation for the definition of CA. Another property of CA with a similar flavor is \emph{universal pattern generation} as discussed in \cite{Ka12}, meaning the property of generating all finite patterns from a given simple initial configuration.

In Janzing's work \cite{Ja10} some results were already proved about physically universal CA, but it was left open whether such an object actually exists. A two-dimensional physically universal CA was constructed by Schaeffer in \cite{Sc14}, but it was left open whether this CA can be made one-dimensional. We construct such a CA, solving the question in the positive. 


\section{Definitions}

Let $A$ be a finite set, called the \emph{state set}, and $\Z^d$ a grid. A \emph{cellular automaton} is a map $f : A^{\Z^d} \to A^{\Z^d}$ defined by a \emph{local function} $F : A^N \to A$, where $N \subset \Z^d$ is a finite \emph{neighborhood}, so that $f(x)_{\vec v} = F(x_{N + \vec v})$ for all $x \in A^{\Z^d}$ and $\vec v \in \Z^d$. It is \emph{reversible} if there is another CA $g : A^{\Z^d} \to A^{\Z^d}$ such that $f \circ g = g \circ f = \mathrm{id}$. An example of a reversible CA is the \emph{shift by $\vec n \in \Z^d$}, defined by $\sigma^{\vec n}(x)_{\vec v} = x_{\vec v + \vec n}$. Other examples can be constructed as follows. Let $A$ be a Cartesian product $\prod_{i = 1}^k A_i$ with projection maps $\pi_i : A \to A_i$, let $\vec n_1, \ldots, \vec n_k \in \Z^d$ be arbitrary, and let $\gamma : A \to A$ be a bijection. Then the CA $f$ defined by $f(x)_{\vec v} = \gamma(\pi_1(x_{\vec v + \vec n_1}), \ldots, \pi_k(x_{\vec v + \vec n_k}))$ is reversible. We call $f$ a \emph{partitioned CA}, and the components $A_i$ are \emph{tracks}. In the CA, the tracks are first shifted individually by the $\vec n_i$, and then the bijection $\gamma$ is applied to every cell.

A CA $f$ is \emph{physically universal} if the following condition holds. For all finite domains $D, E \subset \Z^d$, and all functions $h : A^D \to A^E$, there exists a partial configuration $x \in A^{\Z^d \setminus D}$ and a time $t \in \N$ such that for all $P \in A^D$, we have $f^t(x \cup P)_E = h(P)$. It is \emph{effectively physically universal} if $t$ is polynomial in the diameter of $D \cup E$ and the computational complexity of $h$ according to some `reasonable' complexity measure. In this article, we use circuit complexity, or more precisely, the number of binary NAND gates needed to implement $h$. One could reasonably require also that the configuration $x$ is computed efficiently from the efficient presentation of the function $h$. Our proof gives a polynomial time algorithm for this. See Section~\ref{sec:Final} for a discussion.

\section{The Cellular Automaton}
\label{sec:TheCA}

Our physically universal automaton is a partitioned CA $f$ defined as follows.
\begin{itemize}
\item The alphabet is $A = \{0, 1\}^4$.
\item The shifts are 2, 1, $-1$ and $-2$, and we denote $\D = \{2, 1, -1, -2\}$.
\item For each $a, b \in \{0, 1\}$ bijection $\gamma$ maps the state $(1, a, b, 1)$ to $(1, b, a, 1)$, and $(a, 1, 1, b)$ to $(b, 1, 1, a)$. Everything else is mapped to itself.
\end{itemize}
Intuitively, in the CA $f$ there are four kinds of particles: fast right, slow right, slow left and fast left. At most one particle of each kind can be present in a cell. On each step, every slow (fast) particle moves one step (two steps, respectively) in its direction. After that, if two fast or two slow particles are in the same cell, then the direction of every particle of the other speed is reversed. This resembles the two-dimensional CA of Schaeffer, where particles move to four directions (NE, NW, SE, SW) with speed one, and the head-on collision of two particles causes other particles in the same cell to make a u-turn.

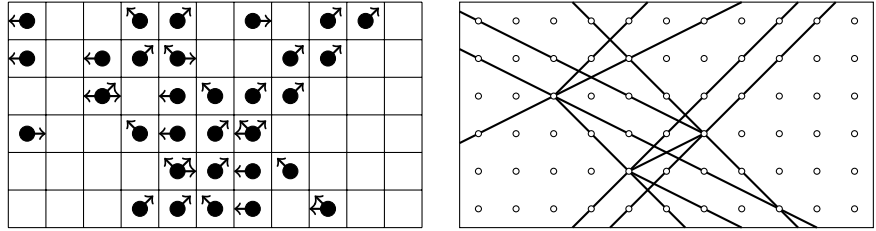
\begin{figure}
\begin{center}
\begin{tikzpicture}[yscale=-1]

\draw[step=.5] (-.5,0) grid (5,3);

\foreach \x/\y in {
     	         2/5,3/5,4/5,5/5,    7/5,
                  3/4,4/4,5/4,6/4,
-1/3,         2/3,3/3,4/3,5/3,
          1/2,    3/2,4/2,5/2,6/2,
-1/1,     1/1,2/1,3/1,        6/1,7/1,
-1/0,         2/0,3/0,    5/0,    7/0,8/0}{
	\draw[fill=black] (\x/2+.25,\y/2+.25) circle (.1);
}
\foreach \x/\y in {
                  3/4,
-1/3,
          1/2,
                  3/1,
                          5/0}{
	\draw[thick,->] (\x/2+.25,\y/2+.25) -- ++(0:.25);
}
\foreach \x/\y in {
	    2/5,3/5,
             3/4,4/4,
                 4/3,5/3,
     1/2,            5/2,6/2,
         2/1,            6/1,7/1,
             3/0,            7/0,8/0}{
	\draw[thick,->] (\x/2+.25,\y/2+.25) -- ++(-45:.25);
}
\foreach \x/\y in {
	            4/5,        7/5,
             3/4,        6/4,
         2/3,        5/3,
                 4/2,
             3/1,
         2/0}{
	\draw[thick,->] (\x/2+.25,\y/2+.25) -- ++(225:.25);
}
\foreach \x/\y in {
                          5/5,    7/5,
                          5/4,
                  3/3,    5/3,
          1/2,    3/2,
-1/1,     1/1,
-1/0}{
	\draw[thick,->] (\x/2+.25,\y/2+.25) -- ++(180:.25);
}

\begin{scope}[xshift=6cm]

\draw (-.5,0) rectangle (5,3);

\draw[thick] (1,3) -- (4,0);
\draw[thick] (1.5,3) -- (4.5,0);
\draw[thick] (2.5,3) -- (.75,1.25) -- (2,0);
\draw[thick] (4,3) -- (1,0);
\draw[thick] (3.25,3) -- (1.75,2.25) -- (2.75,1.75) -- (-.5,.125);
\draw[thick] (4.25,3) -- (-.5,.625);
\draw[thick] (-.5,1.875) -- (3.25,0);

\foreach \x in {0,...,10}{
	\foreach \y in {0,...,5}{
		\draw[fill=white] (\x/2-.25,\y/2+.25) circle (.04);
	}
}

\end{scope}

\end{tikzpicture}
\end{center}
\caption{A sample spacetime diagram of $f$, and a schematic version on the right. Time increases upward. Particles are represented by arrowed bullets. For example, a bullet with arrows to the east, northeast and northwest represents three particles moving at speeds 2, 1 and $-1$, respectively.}

\label{fig:Spacetime}
\end{figure}

\section{The Logical Cellular Automaton}

For the proof of physical universality, we define another CA on an infinite alphabet. For this, define the \emph{ternary conditional operator} by $p(a, b, c) = c \oplus (a \wedge (c \oplus b))$ for all $a, b, c \in \{0,1\}$. That is, $p(a, b, c)$ is equal to $b$ if $a = 1$, and to $c$ otherwise. In many programming languages, $p(a, b, c)$ is denoted by $a\;?\;b:c$.

\begin{definition}
Let $\mathcal{V} = \{\alpha_1, \alpha_2, \ldots\}$ be an infinite set of variables, and denote by $\F$ the set of Boolean functions over finitely many variables of $\mathcal{V}$. The \emph{logical extension of $f$} is the CA-like function $\hat f : \hat A^\Z \to \hat A^\Z$ on the infinite alphabet $\hat A = \F^4$, where the four tracks are first shifted as in $f$, and then the function
\[ (a, b, c, d) \mapsto (p(b \wedge c, d, a), p(a \wedge d, c, b), p(a \wedge d, b, c), p(b \wedge c, a, d)) \]
is applied to each coordinate. A \emph{valuation} is a function $v : \mathcal{V} \to \{0, 1\}$. It extends to $\F$ and then into a function $v : \hat A^\Z \to A^\Z$ in the natural way. 
\end{definition}

The logical extension simulates multiple spacetime diagrams of $f$: one can see that the definition of $\hat f$ is equal to that of $f$, except that each particle is replaced by a Boolean formula that corresponds to the conditional presence or absence of a particle. We think of $A$ as a subset of $\hat A$ containing the constant 0 or constant 1 function in each track. Note that $\hat f$ is also reversible, and we denote by $\hat f^{-1}$ its inverse function. See Figure~\ref{fig:SpacetimeLogic} for a spacetime diagram of $\hat f$.

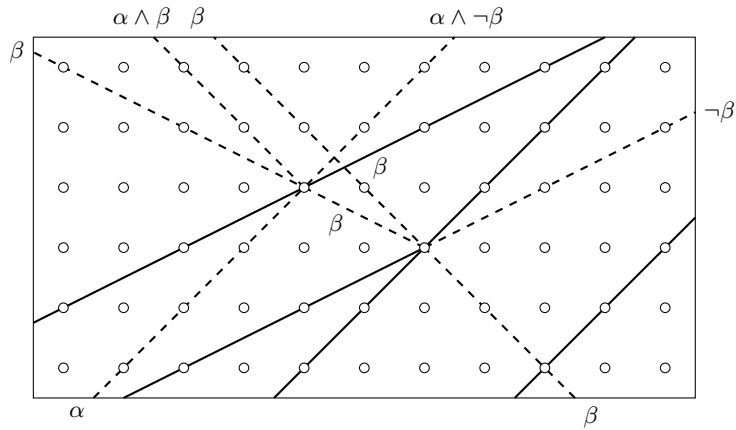
\begin{figure}
\begin{center}
\begin{tikzpicture}[scale=1.6,yscale=-1]

\draw (-.5,0) rectangle (5,3);

\draw[thick,dashed] (0,3) -- (3,0);
\draw[thick,dashed] (1.75,1.25) -- (.5,0);
\draw[thick] (1.5,3) -- (4.5,0);
\draw[thick,dashed] (4,3) -- (1,0);
\draw[thick] (.25,3) -- (2.75,1.75);
\draw[thick,dashed] (2.75,1.75) -- (5,.625);
\draw[thick,dashed] (2.75,1.75) -- (-.5,.125);
\draw[thick] (-.5,2.375) -- (4.25,0);
\draw[thick] (3.5,3) -- (5,1.5);

\foreach \x in {0,...,10}{
	\foreach \y in {0,...,5}{
		\draw[fill=white] (\x/2-.25,\y/2+.25) circle (.04);
	}
}

\node[below left] () at (0,3) {$\alpha$};
\node[below right] () at (4,3) {$\beta$};
\node[right] () at (5,.625) {$\neg \beta$};
\node[above right] () at (2.25,1.25) {$\beta$};
\node[below left] () at (2.15,1.4) {$\beta$};
\node[above left] () at (1,0) {$\beta$};
\node[left] () at (-.5,.125) {$\beta$};
\node[above] () at (.4,0) {$\alpha \wedge \beta$};
\node[above] () at (3.1,0) {$\alpha \wedge \neg \beta$};

\end{tikzpicture}
\end{center}
\caption{A schematic spacetime diagram of $\hat f$, where $\alpha, \beta \in \F$. Particles are represented by solid lines, and Boolean particles by dotted lines. Note that Boolean particles can be created, even though $f$ itself conserves the number of particles.}
\label{fig:SpacetimeLogic}
\end{figure}

The following result holds basically by construction.

\begin{lemma}
\label{lem:LogicRep}
Let $x \in \hat A^\Z$ be a configuration, and let $v : \mathcal{V} \to \{0, 1\}$ be a valuation, so that $v(x) \in A^\Z$. Then for all $t \in \Z$ we have $f^t(v(x)) = v(\hat f^t(x))$.
\end{lemma}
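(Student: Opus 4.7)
The plan is to reduce the lemma to the single-step identity $f \circ v = v \circ \hat f$ on $\hat A^\Z$ and verify that identity by a local computation. The base case $t = 0$ is immediate. Given the one-step identity, induction on $t \geq 0$ yields the claim for non-negative times. For $t < 0$ I would invoke reversibility: both $f$ and $\hat f$ are reversible partitioned CAs, and applying the one-step identity at $\hat f^{-1}(x)$ gives $f(v(\hat f^{-1}(x))) = v(x)$, hence $v \circ \hat f^{-1} = f^{-1} \circ v$, which iterates to all negative $t$.

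To prove the one-step identity, I would exploit that $v$ acts cellwise and trackwise, so it commutes with each of the four track shifts in $\D$ that both $f$ and $\hat f$ perform first. It therefore suffices to check that the local updates agree after valuation: for every $(a,b,c,d) \in \hat A$,
\[ \gamma\bigl(v(a), v(b), v(c), v(d)\bigr) = v\bigl(p(b \wedge c, d, a),\, p(a \wedge d, c, b),\, p(a \wedge d, b, c),\, p(b \wedge c, a, d)\bigr). \]
Since a valuation is a Boolean-algebra homomorphism on $\F$, preserving $\wedge$ and the $\oplus$ that defines $p$, the right-hand side equals the same $p$-formula evaluated on the bits $(A,B,C,D) := (v(a), v(b), v(c), v(d))$. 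The problem thereby reduces to a finite check on $\{0,1\}^4$.

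This finite check splits according to the two conjunctions $B \wedge C$ and $A \wedge D$, which detect the two collision types built into $\gamma$. If $B = C = 1$, both sides output $(D, 1, 1, A)$, matching $\gamma(A,1,1,D) = (D,1,1,A)$; if $A = D = 1$ and $B \wedge C = 0$, both sides output $(1, C, B, 1)$, matching the other $\gamma$-rule; if both conjunctions vanish, every $p$-conditional returns its else-branch and each side acts as the identity on $(A,B,C,D)$, consistent with $\gamma$ fixing such tuples; the fully-saturated case $A=B=C=D=1$ is fixed by both formulations. There is no genuine obstacle here — the lemma is essentially a correctness check that the Boolean local rule of $\hat f$ mirrors the two collision cases of $\gamma$, with $b \wedge c$ and $a \wedge d$ encoding the fast-fast and slow-slow collisions respectively.
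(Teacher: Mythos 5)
Your proposal is correct, and it is precisely the argument the paper leaves implicit: the paper gives no proof beyond the remark that the lemma ``holds basically by construction,'' and your write-up (one-step commutation $f \circ v = v \circ \hat f$ via the fact that valuations are Boolean homomorphisms commuting with the shifts, the finite check of the $p$-formulas against the two collision rules of $\gamma$, and reversibility of $f$ and $\hat f$ for negative $t$) is exactly the intended verification. No discrepancies with the paper's construction; the case analysis on $B \wedge C$ and $A \wedge D$ matches the definition of $\gamma$ correctly.
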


The idea of the proof of physical universality of $f$ using this new CA is the following. We may assume that $D = E = [0, n-1]$ in the definition of physical universality, for some $n \in \N$. Then, we construct a spacetime diagram of $\hat f$ with the following properties. First, in the initial configuration $x \in \hat A^\Z$, the cells of the interval $[0, n-1]$ contain $4n$ distinct variables from $\mathcal{V}$. All other cells of $x$ contain either 0 or 1. There also exists $t > 0$ such that $\hat f^t(x)_{[0, n-1]}$ contains the Boolean functions computing the function $h$ in the definition of physical universality. In the course of the construction, we define which cells of $x$ contain a 1.

\begin{definition}

We introduce the following terminology for the construction.
\begin{itemize}

\item The \emph{configuration of interest}, denoted by $x \in \hat A^\Z$, initially contains the `fully general' state $(\alpha_{4i}, \alpha_{4i+1}, \alpha_{4i+2}, \alpha_{4i+3})$ in every cell $i \in [0, n-1]$, and 0 everywhere else. During the construction, we change the cells of $x$ from 0 to 1, but keep referring to it as $x$, so some of the definitions below depend on the stage of the proof. The \emph{spacetime diagram of interest} is defined similarly.


\item A \emph{(spacetime) position} is an element of $\Z \times \D$ ($\Z \times \Z \times \D$, respectively), representing a (spacetime) point that may contain a particle of certain speed. Note that time is bi-infinite, since our cellular automata are reversible.

\item There is a \emph{Boolean particle} at spacetime position $(i, t, s)$ if $\pi_s(\hat f^t(x)_i)$ is not the constant 0 function, and a \emph{particle} if it is the constant 1 function.
\item There is a \emph{collision} at position $(i, t)$ if $\hat f^t(x)_i$ contains at least three Boolean particles, and a \emph{crossing} if there are at least two Boolean particles.
\item The \emph{input} is the pattern $x_{[0, n-1]} \in \hat A^n$.
\item The \emph{gadget} is the contents of $x$ outside $[0, n-1]$, an element of $A^{\Z \setminus [0, n-1]}$.
\item A \emph{line} is a subset of $\Z \times \N$ of the form $L = L(i, t, s) = \{ (i + s t, t) \;|\; t \in \N \}$ for some speed $s \in \D$ and $i \in \Z$. It is \emph{occupied (in a time interval $I \subset \Z$)} if one of its coordinates (in the region $\Z \times I$) contains a crossing or a Boolean particle of speed $s$. We denote $L^{(t)} = i + s t$ and $L^t = (L^{(t)}, t, s)$. The set of occupied lines in the spacetime diagram of interest is denoted $\Locc$.
\end{itemize}
\end{definition}

For example, there are three crossings in Figure~\ref{fig:SpacetimeLogic}, two of which are collisions. The highest intersection of two dashed lines is not a crossing, as it does not take place at an actual coordinate (one of the white circles). Every line segment in the figure defines an infinite occupied line.



\section{The Diffusion Lemma}

As stated above, we initialize the gadget to the all-0 partial configuration, in which situation we have the following lemma. It states that any finite set of particles in the CA $f$ eventually stop interacting and scatter to infinity. The corresponding result is proved for the physically universal CA of \cite{Sc14} by considering an \emph{abstract CA} over the alphabet $\{0, \frac{1}{2}, 1\}$, with the interpretation that $\frac{1}{2}$ can be either $0$ or $1$, and this lack of information is suitably propagated in collisions. In our version, the role the new state $\frac{1}{2}$ is played by Boolean particles.

\begin{lemma}[Diffusion Lemma]
\label{lem:Diffusion}
Let $x \in \hat A^\Z$ be such that $x_i = 0$ for all $i \notin [0, n-1]$. Then there are $O(n^2)$ crossings in the two-directional spacetime diagram of $x$ under $\hat f$ that all happen in a time window of length $O(n)$. For all other times $t \in \Z$, there are $O(n)$ Boolean particles in $\hat f^t(x)$, and they are contained in the interval $[-2|t|, n + 2|t|]$.
\end{lemma}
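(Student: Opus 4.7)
The lemma has two parts: a \emph{spatial} containment of the Boolean particles at time $t$, and a \emph{temporal} statement confining all crossings to a window of length $O(n)$ while bounding their number by $O(n^2)$. The spatial containment is a direct induction on $|t|$, in either time direction. Every Boolean particle moves by at most $2$ per step, and the logical rule requires three of the four tracks of a cell (after the shifts) to already be nonzero in order to produce a nonzero output in the fourth; this is visible in the formula for $\hat f$, where, for example, the new value on the first track is $p(b\wedge c,d,a)$, which can only become nonzero from $a=0$ if $b$, $c$ and $d$ are all nonzero. Thus a new Boolean particle can only appear at a position within distance $2$ of an already-occupied position, and starting from the support $[0,n-1]$ at time $0$, the support at time $t$ is contained in $[-2|t|, n-1+2|t|]$.

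For the temporal confinement, the plan is to show that after $Cn$ steps in either time direction --- for some absolute constant $C$ --- the Boolean particles of the four speeds occupy pairwise disjoint spatial intervals, so no further crossings can occur. I would trace every Boolean particle back to its \emph{origin}, either the initial configuration or a $3$-collision at which it was created, and prove by induction on the depth of this tracing that a speed-$s$ Boolean particle at time $t$ lies in an interval of the form $[st - O(n),\ st + (n-1) + O(n)]$. The key input is that at the leaves of every chain of creations one reaches the initial data, which is confined to $[0,n-1]$; the geometric constraints on the corresponding creation points (each of which is determined by three incoming lines of known speeds) propagate upward through the tree and keep the deviation from a free trajectory linear in $n$. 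Once these intervals separate --- which, since the speeds $2,1,-1,-2$ are distinct, happens at $|t|$ linear in $n$ --- no further crossings are possible, the dynamics decouples into a product of shifts, and the Boolean particles split into four groups of $O(n)$ each, yielding the $O(n)$ bound on the particle count outside the window.

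Within the active window, the $O(n^2)$ crossing count follows from line counting: each Boolean particle traces out a union of constant-speed line segments, two segments of different speeds meet in at most one spacetime point, and segments of equal speed never meet. Combined with the $O(n)$ particle bounds at the two endpoints of the window, this limits the total number of line segments to $O(n)$ and hence the number of pairwise crossings to $O(n^2)$. The main obstacle throughout is the displacement bound: a priori, long chains of earlier collisions could push a Boolean particle arbitrarily far from its free trajectory, but because every collision in such a chain is itself confined to the spatial support already established, the cumulative deviation stays linear in $n$. Formalising this induction and checking that the constants do not blow up with the depth of the creation tree is the bulk of the technical work.
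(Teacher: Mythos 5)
Your spatial containment argument is fine (indeed, since every track is shifted by at most $2$, the support can grow by at most $2$ per side per step, so even the ``three nonzero tracks'' observation is not needed there). The genuine gap is in the temporal confinement, which is the heart of the lemma. The creation-tree induction you propose does not close as sketched: to locate a collision you intersect the position intervals of the particles entering it, and the deviation bound you obtain for the newly created particle is strictly worse than the one you assumed for its parents (roughly, a deviation of $Kn$ at one level becomes $cKn$ with $c>1$ at the next, because the collision time is only bounded by $O(Kn)$ and the change of speed converts that temporal slack into spatial deviation). Since the depth of the creation tree is not a priori bounded by a constant --- bounding it is essentially equivalent to the statement you are trying to prove --- the constants blow up, exactly at the point you flag as ``the bulk of the technical work.'' The paper avoids this entirely with a one-sided monotone invariant proved by a single induction on $t$: at time $t\geq 0$ every right-going Boolean particle sits at a position $\geq t$ and every left-going one at a position $\leq n-1-t$. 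The inductive step uses the specific form of the collision rule: a reversal that produces, say, a new slow-right particle at cell $i$ requires a fast-right Boolean particle arriving from cell $i-2$, which by hypothesis satisfies $i-2\geq t$. Hence left- and right-movers separate by time $\lceil n/2\rceil$, after which no cell can contain both a left- and a right-mover, so no collisions occur and no new Boolean particles are created; your full four-way separation is not needed.

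A secondary issue: your claim that the $O(n)$ particle counts at the two ends of the window bound the total number of constant-speed segments by $O(n)$ is unjustified, since Boolean particles can be both created and destroyed inside the window (e.g.\ $p(b\wedge c,d,a)$ becomes constant $0$ when $b\wedge c\equiv 1$ and $d\equiv 0$), so interior segments need not survive to either endpoint. This does not matter for the conclusion: inside the window of duration $O(n)$ all activity is confined to an interval of length $O(n)$, so there are at most $O(n^2)$ spacetime positions and hence trivially at most $O(n^2)$ crossings there; after separation the only remaining crossings are between same-direction particles of different speeds, of which there are $O(n)\cdot O(n)$ pairs meeting at most once each, and since those particles occupy an interval of length $O(n)$ at the separation time, all such meetings occur within a further $O(n)$ steps.
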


\begin{proof}
We prove the claim in the positive direction of time. By induction, one sees that after any $t \geq 0$ steps, there can be no right-going Boolean particles in the cells $(-\infty,t-1]$, and no left-going particles in the cells $[n-t, \infty)$. After these sets intersect at time $\lceil n/2 \rceil$, there are no collisions, so no new Boolean particles are created. Thus the number of Boolean particles is at most $6n$, that is, twice the length of the segment of $\hat f^{\lceil n/2 \rceil}(x)$ that may contain Boolean particles. We may also have $O(n^2)$ crossings between Boolean particles going in the same direction with different speeds. Thus there are $O(n^2)$ crossings in total. \qed
\end{proof}

\begin{figure}
\begin{center}
\begin{tikzpicture}[scale=1.15]

\draw (0,1) rectangle (8,5);

\draw[fill=black!20] (3.5,3) -- (0,4.75) -- (0,5) -- (1,5) -- (2,4.5) -- (1.5,5) -- (2.5,5) -- (4,3.5) -- (5.5,5) -- (6.5,5) -- (6,4.5) -- (7,5) -- (8,5) -- (8,4.75) -- (4.5,3) -- (8,1.25) -- (8,1) -- (7,1) -- (6,1.5) -- (6.5,1) -- (5.5,1) -- (4,2.5) -- (2.5,1) -- (1.5,1) -- (2,1.5) -- (1,1) -- (0,1) -- (0,1.25) -- cycle;
\draw[fill=black!30] (3.5,3) -- (4,3.5) -- (4.5,3) -- (4,2.5) -- cycle;
\draw[very thick] (3.5,3) -- (4.5,3);
\draw[dashed] (0,3) -- (3.5,3);
\draw[dashed] (4.5,3) -- (8,3);

\node (c) at (4,3) {};
\foreach \x/\l in {0/1.5, 1.75/1, 5.25/1, 6.5/1.5}{
  \foreach \y in {4.75, 1.25}{
    \draw[very thick] (\x,\y) -- (\x+\l,\y);
    \node[minimum size=.5cm] (a) at (\x+\l/2,\y) {};
    \draw[thick,dotted,->] (c) -- (a);
  }
}

\node [left] () at (0,3) {$x$};

\end{tikzpicture}
\end{center}
\caption{An illustration of Lemma~\ref{lem:Diffusion}. The dashed line is the configuration $x$, and the thick segment is the interval $[0, n-1]$. All collisions take place in the dark gray area, and Boolean particles may occur in the light gray area. The eight horizontal line segments are the scattering Boolean particles, grouped by speed.}
\label{fig:Diffusion}
\end{figure}
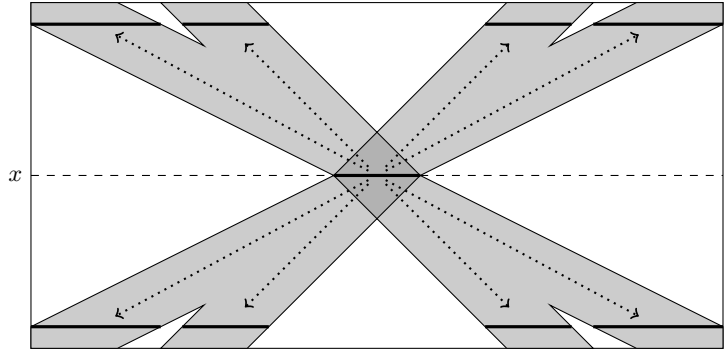

At this point, in the configuration of interest, we have an empty gadget, and the spacetime diagram contains $O(n)$ Boolean particles at any given time. Since the CA $\hat f$ is reversible, the values of the corresponding Boolean expressions determine the values of the original variables.

\section{Manipulating the Spacetime Diagram}

\subsection{Controlled Modifications}

In this section, we introduce new particles in the gadget that will collide with the existing Boolean particles and create new ones. This is called \emph{scheduling collisions}. We never schedule a collision on an occupied line, and never add a Boolean particle on an existing crossing. This is formalized in the following. 


\begin{definition}
\label{def:Control}
A modification of the gadget is \emph{$(a, b, \Ls, \Ts, t, I)$-controlled} for numbers $a, b \in \N$, sets of lines $\Ls$ and $\Ts$, time $t \in \Z$, and interval $I \subset \Z$, if the following conditions hold:
\begin{enumerate}
\item the modification consists of adding at most $a$ particles to the gadget,
\item at most $a$ new occupied lines and $b$ new crossings are introduced,
\item no existing crossings become collisions,
\item no line in $\Ls \cup \Locc$ is occupied by a new Boolean particle or a new crossing,
\item no spacetime position in the set
\begin{equation}
\label{eq:Forbs}
F(\Ts, t, I) = \{ (i, t, s') \;|\; i \in I, s' \in D \} \setminus \Ts^t
\end{equation}
gets a new Boolean particle, and
\item no line in $\Ts$ gets a collision after time $t$.
\end{enumerate}
If the conditions hold in the time interval $(-\infty, t]$ (in particular, condition 6 need not hold at all), then the modification is \emph{weakly $(\Ls, \Ts, t, I)$-controlled}.
\end{definition}

In practice, a controlled modification is one where we add to the gadget a finite number of particles that affect the spacetime diagram of interest only where we want it to be affected: the spacetime positions in $\Ts^t$. The lines in $\Ls \cup \Locc$ and the positions near $\Ts$, that is, those in $F(\Ts, t, I)$, are `protected' from accidentally obtaining any auxiliary Boolean particles created in the modification.

\begin{definition}
Let $j, t \in \Z$. The \emph{positive cone rooted at $(j,t)$} is the set of spacetime coordinates $\mathcal{C}(j, t) = \{ (j+j', t+t') \;|\; t' \in \N, j' \in [-t', t'] \}$.
\end{definition}

The following lemmas are parametrized by the numbers $m_i$ for $i = 1, 2, 3, 4$. They denote, intuitively, the number of existing crossings, 
the number of occupied lines, the size of the additional `forbidden area,' and the number of particles involved, respectively. Also, the expression `if $t' = \Omega(N(m_1, m_2, m_3, m_4))$ is large enough, then $P(t')$ holds' for $t' \in \N$, a function $N : \N^4 \to \N$ and a property $P$ means that there exists a number $T \leq K \cdot f(m_1, m_2, m_3, m_4) + K'$, where $K, K' > 0$ are constants independent of the $m_i$, such that $P(t')$ holds for all $t' \geq T$.

\subsection{Moving the Boolean Particles}

We now prove that we can add a finite number of particles to the gadget, so that after some number of steps, a collection of Boolean particles is `moved' onto any desired lines, at the same time. We do this one particle at a time, and without interfering with the trajectories of any other existing particles.



\begin{lemma}
\label{lem:Movement}
Suppose we have a spacetime diagram of interest with $m_1$ crossings and $m_2$ occupied lines, and a Boolean particle $p$ with label $\beta \in \mathcal{F}$ on a line $L_p$ that contains no collisions after time $t$. Let $\Ls$ be a collection of $m_3$ lines not containing $L_p$. Let $L \notin \Ls$ be an unoccupied line that passes through some spacetime coordinate $(j', t') \in \mathcal{C}(j, t)$ with $t' > t$. Let $\Ts$ be a set of $O(m_3)$ lines containing $L$, and let $I \subset \Z$ be an interval of length $O(m_3)$. If $t' = t + \Omega(m_1 + m_2 + m_3)$ is large enough, then there is an $(O(1), O(m_2), \Ls, \Ts, t', I)$-controlled modification after which the spacetime position $L^{t'}$ contains a Boolean particle with label $\beta$. The same holds if the line $L$ is unoccupied only in the time interval $(-\infty, t']$, but the modification is weakly controlled.
\end{lemma}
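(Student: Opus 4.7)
The plan is to schedule a collision on the worldline $L_p$ of $p$ that reflects it onto a trajectory ending at $L^{t'}$. The local rule of $f$ swaps the two slow tracks exactly when both fast tracks are occupied, and symmetrically for the two fast tracks, so reflections preserve speed magnitude and carry the Boolean label of the reflected track to its partner track while reversing direction. I therefore proceed by cases on the relationship between the speeds of $L_p$ and $L$: the main case is the \emph{opposite-direction} subcase (same speed magnitude, reversed direction), to which the \emph{same-direction} subcase reduces by chaining two reflections through an auxiliary intermediate line.

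In the opposite-direction subcase, $L_p$ and $L$ meet at a unique spacetime coordinate $(j^*, t^*)$; the hypothesis that $L$ passes through some $(j', t') \in \mathcal{C}(j, t)$ with $t' > t$, together with the geometry of the cone, allows me to choose $t^*$ with $t < t^* < t'$. I then add two helper particles (constant-$1$, hence actual particles of the gadget) of the complementary speed class, one right-moving and one left-moving, whose worldlines are arranged to meet exactly at $(j^*, t^*)$. Their collision causes the two slow-track formulas at that cell to swap, so the Boolean particle $p$ arriving along $L_p^{t^*}$ emerges on $L^{t^*}$ with the same label $\beta$ and proceeds along $L$ until it occupies $L^{t'}$. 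The same-direction subcase composes two such reflections and uses a constant number of additional helpers.

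The verification that the modification is $(O(1), O(m_2), \Ls, \Ts, t', I)$-controlled exploits the slack $t' - t = \Omega(m_1 + m_2 + m_3)$. Up to finitely many choices, the helper pair is parametrized by the meeting point $(j^*, t^*) \in L_p$, so I have $\Theta(t' - t)$ candidate configurations. Each of the controlled-modification constraints of Definition~\ref{def:Control}---not passing a helper worldline through an existing crossing (condition~3), not coinciding with a line of $\Ls \cup \Locc$ (condition~4), not placing a new Boolean particle into $F(\Ts, t', I)$ (condition~5), and not causing a post-$t'$ collision on a line of $\Ts$ (condition~6)---excludes only $O(1)$ candidate meeting points per existing crossing, per occupied line, and per protected coordinate of $\Ts$ or $I$, for a total of $O(m_1 + m_2 + m_3)$ forbidden values. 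A pigeonhole argument then yields a valid meeting point. Each of the $O(1)$ helpers contributes one new occupied line, which crosses each existing occupied line of a different speed at most once, giving the bound $O(m_2)$ on new crossings.

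The main obstacle is handling conditions~(3)--(6) jointly rather than in isolation: each is a simple exclusion individually, but the pigeonhole count must account for all of them against the same pool of candidate meeting points, and the same bookkeeping has to go through both for the direct reflection and for the composed one in the same-direction subcase. Tracking which existing entity (crossing, occupied line, protected position) forbids which meeting points, and then uniformly bounding the total by $O(m_1 + m_2 + m_3)$, is the delicate part; the constant factor hidden in the $\Omega$ of the statement is determined by this count. The weakly controlled variant is obtained by simply dropping condition~(6) from the verification, which is exactly the concession that permits $L$ to become occupied strictly after time $t'$.
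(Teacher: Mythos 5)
Your proposal has two genuine gaps, and they are connected. First, the case analysis is incomplete: a ``reflection'' (two constant helper particles of the complementary speed class meeting the cell of $p$) can only reverse direction within the same speed class, i.e.\ it maps speed $s$ to $-s$. The lemma, however, allows the target line $L$ to have \emph{any} speed in $\D$, and no chain of pure reflections can turn a fast particle into a slow one or vice versa. The paper's construction handles this by scheduling collisions in which the label $\beta$ is \emph{conditionally copied} onto a track of a different speed (e.g.\ a fast-right $\beta$ colliding with constant particles of speeds $-1$ and $-2$ produces a slow-right Boolean particle labeled $\beta$), analyzing the representative speed pairs $(2,-2)$, $(2,-1)$, $(1,-2)$ and obtaining the rest by mirror symmetry and composition. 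Your same/opposite-direction dichotomy simply does not reach the mixed-magnitude cases.

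Second, and more damaging, in your main (opposite-direction) subcase there is no freedom at all: a single reflection lands $p$ on the line $L$ only if the bounce happens at the unique intersection of $L_p$ and $L$, which is a fixed point (and in general not even an integer spacetime coordinate, e.g.\ speed-$2$ and speed-$(-2)$ lines typically meet at quarter-integer times). Reflecting anywhere else on $L_p$ puts $p$ on a line parallel to $L$, not on $L$. Hence your pigeonhole argument---``the helper pair is parametrized by the meeting point $(j^*,t^*)\in L_p$, giving $\Theta(t'-t)$ candidates''---is vacuous in exactly the case you call the main one: with the meeting point forced, the helper worldlines are forced, and nothing prevents them from hitting existing crossings, lines of $\Ls\cup\Locc$, or the forbidden set $F(\Ts,t',I)$, so conditions 3--6 of Definition~\ref{def:Control} cannot be guaranteed. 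This is precisely why the paper always routes $p$ through a two-collision detour (three legs of speeds $2$, $1$, $-2$ in the model case), which yields a free integer parameter $k$ ranging over an interval of length $\Omega(t'-t)$ subject only to a congruence modulo $3$; the divisibility constraint is always satisfiable, each bad event (hitting a protected line, an existing crossing, or a forbidden position) excludes few values of $k$, and the exclusion count $O(m_1+m_2+m_3)$ is taken over choices of $k$, not over meeting points of two fixed lines. Your counting paragraph describes the right kind of bookkeeping, but it is attached to a parametrization that does not exist in your construction; to repair the proof you would have to abandon the single-reflection route and introduce an adjustable intermediate leg, at which point you are essentially reconstructing the paper's argument, including its treatment of lattice divisibility.
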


\begin{proof}
Denote the speed of the particle $p$ by $s$, and the target speed by $s'$. We assume for simplicity that $s = 2$ and $s' = -2$.
One can perform a similar analysis for the cases $(s, s') = (2, -1)$ and $(s, s') = (1, -2)$.
By mirror symmetry, and by repeating such a modification at most twice, one obtains all possible movements.

\begin{figure}
\begin{center}
\begin{tikzpicture}[scale=0.15]

\clip (-22.7,-5) rectangle (57,40);

\def\k{5}
\def\sa{6}
\def\sb{3}
\def\ta{20}
\def\tb{30}
\def\h{\sa + 0.333*\ta + 0.333*2*\tb + 0.333*\sb - 0.333*\sa - 0.333*\k - \sb + \k}
\def\m{0.333*\ta + 0.333*2*\tb + 0.333*\sb - 0.333*\sa - 0.333*\k}
\coordinate (O) at (\sa-2*\sb,0);
\coordinate (S) at (\sa,\sb);
\coordinate (H1) at ($ (S) + (2*\k,\k) $);
\coordinate (H2) at (\h, \m);
\coordinate (T) at (\ta,\tb);

\coordinate (P1) at (\h - 0.666*\ta - 0.666*2*\tb - 0.666*\sb + 0.666*\sa + 0.666*\k, 0);
\coordinate (P4) at (\h + 0.333*\ta + 0.333*2*\tb + 0.333*\sb - 0.333*\sa - 0.333*\k, 0);
\coordinate (P2) at (\sa+3*\k+\sb,0);
\coordinate (P3) at (\sa+4*\k+2*\sb,0);

\coordinate (T1) at (0,60);
\coordinate (T2) at (1,60);

\clip(-25,-5) rectangle (65,50);


\node[below] at (O) {$(j-2t, 0)$};
\node[fill,inner sep=1,shape=circle] at (O) {};

\node[above left] at (S) {$(j, t)$};
\node[fill,inner sep=2,shape=circle] at (S) {};

\node[right=0.2] at (H1) {$(j + 2k, t + k)$};
\node[fill,inner sep=1,shape=circle] at (H1) {};

\node[right = 0.2] at (H2) {$(j + 2k + k', t + k + k')$}; 
\node[fill,inner sep=1,shape=circle] at (H2) {};

\node[below left] at (T) {$(j', t')$};
\node[fill,inner sep=2,shape=circle] at (T) {};

\node[below] at (P1) {$(j - 2t - k', 0)$}; 
\node[fill,inner sep=1,shape=circle] at (P1) {};
\node[below] at (P2) {$(j+t+3k, 0)$};
\node[fill,inner sep=1,shape=circle] at (P3) {};
\node[above right] at (P3) {$(j+2t+4k, 0)$};
\node[fill,inner sep=1,shape=circle] at (P4) {};
\node[below left] at (P4) {$(j + t + 3k + 2k', 0)$}; 
\node[fill,inner sep=1,shape=circle] at (P2) {};

\draw (P1) -- (H2);
\draw (P4) -- (H2);
\draw (P2) -- (H1);
\draw (P3) -- (H1);
\draw[dashed] (O) -- (H1) -- (H2) -- (T);


\coordinate (T3) at (intersection of T1--T2 and P3--H1);
\draw (H1) -- (T3);
\coordinate (T4) at (intersection of T1--T2 and P2--H1);
\draw[dashed] (H1) -- (T4);
\coordinate (T4) at (intersection of T1--T2 and P2--H1);
\draw[dashed] (H1) -- (T4);
\coordinate (T5) at (intersection of T1--T2 and H2--T);
\draw[dashed] (T) -- (T5);
\coordinate (T6) at (intersection of T1--T2 and H1--H2);
\draw[dashed] (H2) -- (T6);
\coordinate (T7) at (intersection of T1--T2 and P1--H2);
\draw[dashed] (H2) -- (T7);
\coordinate (T8) at (intersection of T1--T2 and O--H1);
\draw[dashed] (H1) -- (T8);
\coordinate (T9) at (intersection of T1--T2 and P4--H2);
\draw (H2) -- (T9);


\coordinate (A) at (intersection of P1--H2 and P3--H1);
\node[left=0.2] at (A) {$A$};
\node[fill, inner sep=1,shape=circle] at (A) {};

\coordinate (B) at (intersection of P2--H1 and P1--H2);
\node[right=0.2] at (B) {$B$};
\node[fill, inner sep=1,shape=circle] at (B) {};

\coordinate (C) at (intersection of P4--H2 and O--H1);
\node[right=0.2] at (C) {$C$};
\node[fill, inner sep=1,shape=circle] at (C) {};


\node[below right=-0.1] () at ($ (O)!0.5!(H1) $) {$L_0$};
\node[above left=-0.1] () at ($ (P1)!0.5!(A) $) {$L_1$};
\node[below left=-0.1] () at ($ (P2)!0.5!(H1) $) {$L_2$};
\node[above right=-0.1] () at ($ (P3)!0.5!(H1) $) {$L_3$};
\node[above right=-0.1] () at ($ (P4)!0.5!(H2) $) {$L_4$};
\node[below right=-0.1] () at ($ (H1)!0.6!(H2) $) {$L_5$};
\node[below left=-0.1] () at ($ (H2)!0.4!(T5) $) {$L_6$};



\end{tikzpicture}
\end{center}
\caption{Moving a Boolean particle.}
\label{fig:Moving}
\end{figure}
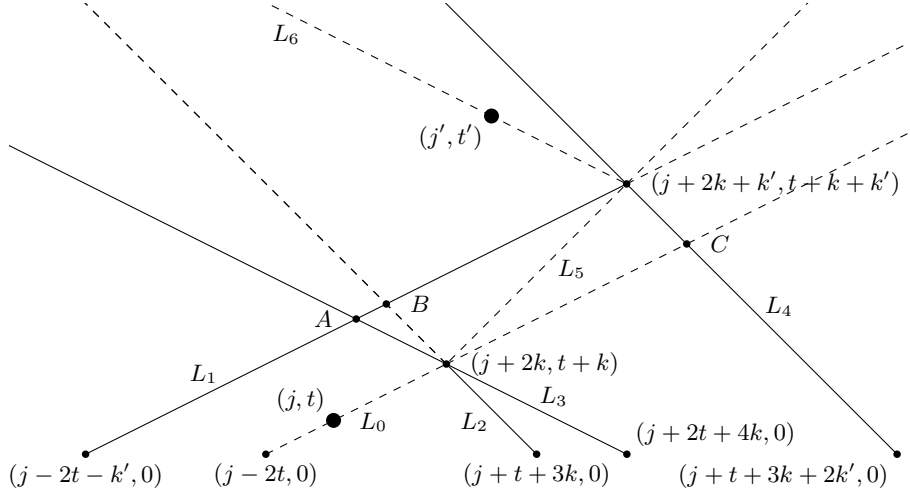

We change the Boolean particle $p$ to a slow right Boolean particle after some $k \in \N$ steps, by scheduling a suitable collision at the spacetime coordinate $(j+2k, t+k)$. After some $k' \in \N$ more steps, we turn this Boolean particle into a fast left one; this happens at the spacetime coordinate $(j+2k+k', t+k+k')$. Our goal is that $t' - t - k - k'$ steps after the second collision, the Boolean particle is at $(j', t')$. Of course, whether this happens depends on our choices of $k$ and $k'$. Namely, we must have
\[ j' = j + 2k + k' - 2(t' - t - k - k') = j + 4k + 3k' + 2t - 2t' \]
due to the speeds of the Boolean particles. This implies that
\begin{equation}
\label{eq:kprime}
k' = \frac13(j' - j + 2(t' - t) - 4k)
\end{equation}
so that in particular $j' - j + 2(t' - t) - 4k$ should be divisible by $3$, or in other words, $k \equiv 4k \equiv j' - j + 2(t' - t) \bmod 3$.



The other obvious constraints are $0 < k$, $0 < k'$ and $k + k' \leq t' - t$. Expanding~\eqref{eq:kprime} in the latter two inequalities, we obtain
\begin{equation}
\label{eq:kbounds}
\max(0, j' - j - (t' - t)) \leq k < \frac14(j' - j + 2(t' - t))
\end{equation}
for the variable $k$. Denote by $I \subset \Z$ this interval of possible values for $k$.


By the assumptions, we have $j - (t'-t) \leq j' \leq j + (t'-t)$, which implies $j' - j - (t' - t) \leq 0$, so that the left hand side of~\eqref{eq:kbounds} can be replaced by $0$. The length of the interval $I$ is therefore
\[ a = \frac14 (j - j' + 2(t' - t)) \geq \frac14 (t'-t) = \Omega(m_1+m_2+m_3). \]
Since the only other requirement on $k$ is parity modulo $3$, there are $\Omega(m_1+m_2+m_3)$ possible choices for $k$. If we have $k \in I$ and $k \equiv j' - j + 2(t' - t) \bmod 3$, we say \emph{$k$ respects the spacetime constraints}.


Now, for any $k$ respecting the spacetime constraints, we give a set of particles that, when added to the gadget at time $0$, realize these two collisions assuming that they do not collide with any existing particles. The positions (coordinate-speed tuples) of the particles are $P_1 = (j - 2t - k', 2)$, $P_2 = (j + 3k + t, -1)$, $P_3 = (j + 4k + 2t, -2)$ and $P_4 = (j + t + 3k + 2k', -1)$. We denote by $L_i$ the line containing the position $P_i$ at time $0$, and also $L_p = L_0$.


First, let us forget about the lines $\Ls$ and the existing Boolean particles other than $p$, and track how $p$ (continued backward to its imaginary starting position on $x$, although it may of course have been produced in a collision), and the new particles added at $P_1$, $P_2$, $P_3$, $P_4$, interact. We show that, in fact, they move the particle $p$ to $(j', t')$, as desired for any choice of $k$ that respects the spacetime constraints, and all in all, no new collisions other than the two desired ones are introduced. Of course, once we have done this, it is enough to show that for some choice of $k$ that respects the spacetime constraints, the conditions of Definition~\ref{def:Control} are satisfied, since then the new Boolean particles introduced by our modification do not interfere with the existing ones in unwanted ways.


Let us proceed to the analysis. The reader should consult Figure~\ref{fig:Moving} for a visualization. At the spacetime coordinate $(j+2k, t+k)$, the Boolean particle $p$ collides with the particles added to $P_2$ and $P_3$, resulting in all four types of Boolean particles. The new occupied line we are interested in is the path where we are moving $p$, or the slope-$1$ line $L_5$. At the spacetime coordinate $(j + 2k + k', t + k + k')$, where $P_3$, $P_4$ and the slowed-down $p$ collide, all four Boolean particles are produced as well. Again, the interesting line is the one where we move $p$, or the line $L = L_6$ with slope $-\frac12$.

Let us compute the equations of the lines $L_0, L_1, \ldots, L_6$:
\begin{align*}
 L_0 &= \{(j-2t+2y, y) \;|\; y \in \N\}, & L_1 &= \{(j-2t-k'+2y, y) \;|\; y \in \N\}, \\
 L_2 &= \{(j+t+3k-y, y) \;|\; y \in \N\}, & L_3 &= \{(j+2t+4k-2y, y) \;|\; y \in \N\}, \\
 L_4 &= \{(j+t+3k+2k'-y, y) \;|\; y \in \N\}, & L_5 &= \{(j-t+k+y, y) \;|\; y \in \N\}, \\
 L_6 &= \{(j'+2t'-2y, y) \;|\; y \in \N\}. & &
\end{align*}
Now, let us compute their intersection points (in addition to the desired intersections $(j+2k,t+k)$ and $(j+2k+k',t+k+k')$) into Table~\ref{tab:Intersect}. The last two intersections, $D$ and $E$, are not shown in Figure~\ref{fig:Moving}, since $D$ did not fit, and $E$ does not appear in the configuration (the others always appear).

\begin{table}[ht]
\begin{center}
\caption{The incidental intersections of the lines $L_0, L_1, \ldots, L_6$.}
\label{tab:Intersect}
\renewcommand{\arraystretch}{1.2}
\begin{tabular}{|c|c|c|}
\hline
Lines & Intersection point & Label \\
\hline \hline
$L_1$ and $L_3$ & $(j+2k-\frac{1}{2}k', t+k+\frac{1}{4}k')$ & $A$ \\
\hline
$L_1$ and $L_2$ & $(j+2k-\frac{1}{3}k',t+k+\frac{1}{3}k')$ & $B$ \\
\hline
$L_0$ and $L_4$ & $(j+2k+\frac{4}{3}k', t+k+\frac{2}{3}k')$ & $C$ \\
\hline
$L_2$ and $L_6$ & $(j+2k+\frac{3}{2}k', t+k+\frac{3}{4}k')$ & $D$ \\
\hline
$L_3$ and $L_4$ & $(j+2k+4k', t+k-2k')$ & $E$ \\
\hline
\end{tabular}
\end{center}
\end{table}



It is easy to verify from the equations of the lines that for each $i \in \{1, 2, 3, 4, 5\}$, every choice of $k$ results in a different line $L_i$ (the lines $L_0$ and $L_6$ are always the same). Also, the two collisions we introduced are different for each $k$. Recall now that there are $m_1$ crossings and $m_2$ occupied lines in the spacetime diagram of interest, and that the set $\Ls$ has cardinality $m_3$. Furthermore, the cardinality of the set $\Ts$ of target lines and of the set $F(\Ts, t', I)$ of forbidden spacetime positions defined in \eqref{eq:Forbs} is likewise $O(m_3)$. Thus, there are $O(m_1+m_2+m_3)$ such choices for $k$ that at least one of the lines $L_i$ is in $\Ls \cup \Locc$ or contains an existing crossing, a line in $\Ls \cup \Locc$ gets a new collision, or a spacetime position in $F(\Ts, t, I)$ gets a Boolean particle. If $a = \Omega(m_1+m_2+m_3)$ is large enough, we can guarantee the existence of a $k$ that respects the spacetime constraints, and is not in the aforementioned set of cardinality $O(m_1+m_2+m_3)$.

For such $k$, the modification of adding the particles $P_1$, $P_2$, $P_3$ and $P_4$ to the gadget satisfies the conditions 3, 4 and 5 of Definition~\ref{def:Control} in the time interval $(-\infty, t']$. Condition 2 is also satisfied in this interval, since the only new occupied lines are $L_1, \ldots, L_6$, and every new crossing happens at the intersection point of an existing occupied line and one of $L_1, \ldots, L_6$, the number of which is $O(m_2)$, or at one of the five intersection points in Table~\ref{tab:Intersect}. Finally, condition 1 is satisfied by definition, and thus the modification is weakly $(O(1), O(m_2), \Ls, \Ts, t', I)$-controlled, and results in the spacetime position $(j', t', -2)$ containing a Boolean particle with label $\beta$.

If the target line $L$ is completely unoccupied, then condition 3 is satisfied in the entire spacetime diagram, so that no new collisions or occupied lines are introduced after time $t'$. We can also check from Table~\ref{tab:Intersect} that the only crossing on line $L$ is at $(j+2k+k', t+k+k')$, so that condition 6 is satisfied. Then the modification is $(O(1), O(m_2), \Ls, \Ts, t', I)$-controlled. \qed
\end{proof}

The point of the set of forbidden lines $\Ls$ and positions $I$ is that  we can in fact move an arbitrary number of Boolean particles at the same time. The idea is that we move them by applying Lemma~\ref{lem:Movement} repeatedly to one Boolean particle at a time, always adding the target lines of the remaining ones into the protected set $\Ls$. This guarantees that the lines are not accidentally occupied too early. 

\begin{corollary}
\label{cor:Movement}
Suppose we have a spacetime diagram of interest with $m_1$ crossings and $m_2$ occupied lines, and some Boolean particles $p_1, \ldots, p_{m_4}$ with labels $\beta_k \in \mathcal{F}$ on lines $L_{p_k}$ that contain no collisions after time $t$. Let $\Ls$ be a collection of $m_3$ lines not containing any of the $L_{p_k}$. Let $L_k \notin \Ls$ be unoccupied and mutually disjoint lines that pass through some spacetime coordinates $(j'_k, t') \in \mathcal{C}(L_{p_k}^{(t)}, t)$ with $t' > t$. Denote $\Ts = \{ L_1, \ldots, L_{m_4} \}$, and let $I \subset \Z$ be an interval of length $O(m_3)$. If $t' = t + \Omega(m_1 + m_3 + m_4(m_2 + m_4))$ is large enough, then there is an $(O(m_4), O(m_4(m_2 + m_4)), \Ls, \Ts, t', I)$-controlled modification after which each spacetime position $L_k^{t'}$ contains a Boolean particle with label $\beta_k$. The same holds if the lines $L_k$ are unoccupied only in the time interval $(-\infty, t']$, but the modification is weakly controlled.
\end{corollary}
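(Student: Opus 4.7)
The plan is to apply Lemma~\ref{lem:Movement} iteratively, once for each index $k = 1, 2, \ldots, m_4$, moving the Boolean particle $p_k$ onto its target line $L_k$ within the same final time $t'$. The key observation is that after the $k$-th application the line $L_k$ becomes part of $\Locc$, so condition~4 of Definition~\ref{def:Control} automatically protects it against further interference by subsequent applications; only the yet-to-be-used target lines $L_{k+1}, \ldots, L_{m_4}$ need to be explicitly protected, by enlarging the input set $\Ls$ of the next invocation.

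Concretely, for the $k$-th call to Lemma~\ref{lem:Movement} I would use the source particle $p_k$ with target line $L_k$ and the parameters
\[
\Ls_k = \Ls \cup \{L_{k+1}, \ldots, L_{m_4}\}, \qquad \Ts_k = \{L_1, \ldots, L_{m_4}\}, \qquad I_k = I.
\]
The hypothesis $L_k \notin \Ls_k$ holds because $L_k$ is unoccupied whereas each $L_{p_j}$ is occupied, and because the $L_j$ are mutually disjoint. The hypothesis that $L_{p_k}$ is collision-free after time $t$ is preserved throughout the induction: $L_{p_k}$ lies in $\Locc$ from the outset, and every earlier application satisfies condition~4, which forbids new crossings on previously occupied lines.

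For the parameter accounting, let $m_1^{(k)}$ and $m_2^{(k)}$ denote the numbers of crossings and occupied lines just before the $k$-th application. Each invocation adds $O(1)$ occupied lines and $O(m_2^{(k)})$ crossings, so induction gives $m_2^{(k)} = m_2 + O(k)$ and $m_1^{(k)} = m_1 + O(k(m_2+k))$. Summing over the $m_4$ steps yields the claimed aggregate of $O(m_4)$ new particles and $O(m_4(m_2+m_4))$ new crossings. The size of $\Ls_k \cup \Ts_k$ is $O(m_3 + m_4)$, so the time requirement per call becomes $t'-t = \Omega(m_1^{(k)} + m_2^{(k)} + m_3 + m_4) = \Omega(m_1 + m_3 + m_4(m_2+m_4))$, a bound that is uniform in $k$, so a single sufficiently large $t'$ serves every application simultaneously.

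The main obstacle I anticipate is verifying that conditions~5 and~6 of Definition~\ref{def:Control} aggregate cleanly across the $m_4$ applications. Because $\Ts_k$ and $I_k$ are taken uniformly in $k$, the forbidden set $F(\Ts, t', I)$ is the same in every call; no individual application introduces a Boolean particle there, and hence neither does their composition. Condition~6 at time $t'$ follows similarly from the per-call guarantee on each $L_j$. For the weakly controlled variant the target lines are only unoccupied in $(-\infty, t']$, so each individual invocation is itself only weakly controlled and condition~6 is dropped, but the aggregation of conditions~2--5 in the interval $(-\infty, t']$ goes through identically. The final sanity check---that the label $\beta_k$ is faithfully transported onto $L_k^{t'}$---is immediate from the conclusion of the lemma. \qed
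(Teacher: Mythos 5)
Your proposal is correct and follows essentially the same route as the paper's own proof: iterate Lemma~\ref{lem:Movement} particle by particle, protecting the not-yet-used target lines by adding them to $\Ls$, with the same accounting $m_2^{(k)} = O(k)$ new occupied lines, $m_1^{(k)} = O(km_2+k^2)$ new crossings, and the uniform time bound $t' = t + \Omega(m_1 + m_3 + m_4(m_2+m_4))$. The aggregation of the control conditions (including the role of the disjointness of the $L_k$ for condition~6 and the weakening in the weakly controlled case) matches the paper's argument.
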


\begin{proof}
We apply Lemma~\ref{lem:Movement} to each Boolean particle $p_k$ separately. First, let $\Ls_1 = \Ls \cup \{ L_2, \ldots, L_{m_4} \}$. We can do a (weakly) $(O(1), O(m_2), \Ls_1, \Ts, t, I)$-controlled modification to the gadget such that the spacetime position $L_1^{t'}$ contains a Boolean particle with label $\beta_1$. This modification does not introduce crossings or Boolean particles to the lines of $\Ls_1$ (in particular, it does not affect the Boolean particles $p_2, \ldots, p_{m_4}$), and it creates $O(m_2)$ new crossings and a constant number of new occupied lines.

We continue inductively for $k < m_4$, assuming that the particles $p_1, \ldots, p_k$ have been moved to their places by introducing $O(k m_2 + k^2)$ new crossings and $O(k)$ new occupied lines. We now define $\Ls_{k+1} = \Ls \cup \{ L_{k+2}, \ldots, L_m \}$, and use Lemma~\ref{lem:Movement} to do a (weakly) $(O(1), O(m_2 + k), \Ls_{k+1}, \Ts, t, I)$-controlled modification to the gadget, which causes $L_{k+1}^{t'}$ to contain a Boolean particle with label $\beta_{k+1}$. Note that $\Ls_{k+1}$ has size $O(m_3 + m_4)$, there are $O(m_1 + k m_2 + k^2)$ crossings and $O(m_2 + k)$ occupied lines in total, and $t' = t + \Omega(m_1 + m_3 + m_4(m_2 + m_4))$. Furthermore, if the lines $L_k$ do not intersect at all, no new crossings are introduced on them at any point. Thus, the conditions of Lemma~\ref{lem:Movement} hold.

After the inductive construction, we have modified the gadget by adding $O(m_4)$ particles to it. The total numbers of new crossings and occupied lines are $O(m_4(m_2 + m_4))$ and $O(m_4)$, respectively. The remaining conditions for the modification to be (weakly) $(O(m_4), O(m_4(m_2 + m_4)), \Ls, \Ts, t', I)$-controlled are easy to verify. This concludes the proof. \qed
\end{proof}


\subsection{Computing with the Boolean Particles}

Next, we will do some computation with the Boolean particles. Namely, we show that the NAND of two Boolean particles can be computed nondestructively in any spacetime position, as long as we have enough time, and the target line is in the intersection of the cones rooted at the input particles.

\begin{lemma}
\label{lem:Computation}
Suppose we have a spacetime diagram of interest with $m_1$ crossings and $m_2$ occupied lines, and two speed-$1$ Boolean particles $p_1, p_2$ labeled $\beta_1, \beta_2 \in \F$ with $\beta_2 \wedge \beta_1 \not\equiv 1$ which occupy distinct lines $L_1$ and $L_2$ that contain no collisions after time $t$. Let $\Ls$ be a set of $m_3$ lines not containing $L_1$ and $L_2$, and let $L \notin \Ls$ be an unoccupied line of slope $1$ to the left of $L_1$ and $L_2$ that passes through some spacetime coordinate $(j', t') \in \mathcal{C}(L_1^{(t)}, t) \cap \mathcal{C}(L_2^{(t)}, t)$ with $t' > t$. Let $\Ts = \{L_1, L_2, L\}$. If $t' = t + \Omega(m_1 + m_2 + m_3)$ is large enough, then there is an $(O(1), O(m_2), \Ls, \Ts, t', \emptyset)$-controlled modification after which the spacetime positions $L_1^{t'}$, $L_2^{t'}$ and $L^{t'}$ contain Boolean particles labeled $\beta_1$, $\beta_2$ and $\neg (\beta_1 \wedge \beta_2)$, respectively.
\end{lemma}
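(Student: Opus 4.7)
The plan is to engineer a single fast--fast collision at a spacetime cell $(J,T)$ on $L$ with $t < T < t'$ and $(J,T) \in \mathcal{C}(L_1^{(t)}, t) \cap \mathcal{C}(L_2^{(t)}, t)$. Just before the CA step at $(J,T)$, I would arrange the state $(a, b, c, d) = (\beta_1, 1, 0, \beta_2)$: fast-right carrying $\beta_1$ (delivered from $p_1$ by Lemma~\ref{lem:Movement}), slow-right carrying a pre-placed $1$ whose trajectory is $L$, slow-left empty, and fast-left carrying $\beta_2$ (delivered from $p_2$ analogously). Then $a \wedge d = \beta_1 \wedge \beta_2$ triggers the fast-collision branch on the slow tracks, so the new slow-right value is $p(\beta_1 \wedge \beta_2, 0, 1) = \neg(\beta_1 \wedge \beta_2)$, which by the hypothesis $\beta_1 \wedge \beta_2 \not\equiv 1$ is not identically zero and so continues as a genuine Boolean particle along $L$ to $L^{t'}$. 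A direct inspection of the bijection $\gamma$ shows that the slow-right source case of the Movement Lemma preserves the slow-right value on the source line at its first redirection collision, so $L_1$ and $L_2$ remain occupied by $\beta_1$ and $\beta_2$ at time $t'$, protected from later disturbance by condition~4 of Definition~\ref{def:Control}.

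The modification decomposes into three ingredients. First, I would add a slow-right particle labeled $1$ to the gadget, placed at $(L^{(0)}, 0)$ when $L^{(0)} \notin [0, n-1]$; in the corner case $L^{(0)} \in [0, n-1]$, or when the prefix of $L$ would turn an existing crossing into a collision, I would synthesize the $1$ on $L$ just before $(J,T)$ via a bounded-size side collision of a few auxiliary particles placed outside $[0, n-1]$. Second, I would apply Lemma~\ref{lem:Movement} to $p_1$ to deliver $\beta_1$ on the fast-right line $L_{\mathrm{fr}}$ through $(J,T)$ at time $T$, with protected set $\Ls \cup \{L_2, L\}$. Third, the analogous application to $p_2$ for the fast-left line through $(J,T)$, with protected set $\Ls \cup \{L_1, L, L_{\mathrm{fr}}\}$. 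The direction changes $(1, 2)$ and $(1, -2)$ are obtained from the basic case via the ``mirror symmetry and repeat'' remark of Lemma~\ref{lem:Movement}.

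A suitable $(J,T)$ exists with $\Omega(m_1 + m_2 + m_3)$ slack: since $L$ has slope $1$ and passes through $(j', t') \in \mathcal{C}(L_i^{(t)}, t)$, every $(L^{(T)}, T)$ for $T$ in an interval of length $\Omega(t' - t)$ also lies in both cones, and at most $O(m_1 + m_2 + m_3)$ values of $T$ are forbidden---those making any of the introduced lines coincide with some line in $\Ls \cup \Locc$, or dropping a new Boolean particle on one of the $m_1$ existing crossings. For $t' - t = \Omega(m_1 + m_2 + m_3)$, an admissible $T$ remains.

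Verification of Definition~\ref{def:Control} with parameters $(O(1), O(m_2), \Ls, \Ts, t', \emptyset)$ is by counting: the combined modification adds $O(1)$ particles, creates $O(1)$ new occupied lines, and introduces $O(m_2)$ new crossings (each Movement application and the background line on $L$ contribute $O(m_2)$ through intersections with existing occupied lines). Conditions~4 and~6 follow from the protected sets and from $T < t'$; condition~5 is vacuous because $I = \emptyset$. The main obstacle will be condition~3---ensuring no existing crossing becomes a collision while all three sub-modifications and the NAND byproducts (a slow-left $\beta_1 \wedge \beta_2$ leaving $(J,T)$ plus the fast-particle continuations) are overlaid---and this is absorbed by the $\Omega(m_1 + m_2 + m_3)$ slack in the choice of $T$, together with the side-collision fallback for the background.
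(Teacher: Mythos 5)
Your proposal is correct in substance but uses a genuinely different collision scheme from the paper's. The paper's proof schedules \emph{five} collisions: two that nondestructively copy $\beta_1$ and $\beta_2$ onto fast-left tracks (using each $p_i$ as one of a colliding slow pair, so the conditional swap acts on an added constant-$1$ fast cargo and leaves $p_i$ untouched), a third that slows the $\beta_1$-copy, a fourth where the still-fast $\beta_2$-copy catches it and a constant slow particle converts the fast tracks into $\beta_1 \wedge \beta_2$, and a fifth that writes $\neg(\beta_1 \wedge \beta_2)$ onto $L$. You instead deliver $\beta_1$ fast-right and $\beta_2$ fast-left to a single cell of $L$ pre-loaded with a slow-right constant $1$, and verify directly from $\gamma$ that the one collision $(\beta_1, 1, 0, \beta_2) \mapsto (\beta_1, \neg(\beta_1\wedge\beta_2), \beta_1\wedge\beta_2, \beta_2)$ computes the NAND in place; the local-rule computation is correct, and your observation that the first copy collision for a slow-right source leaves the source track fixed (because $a \wedge d = 0$ there) correctly secures the nondestructiveness claim for $L_1^{t'}$ and $L_2^{t'}$. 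Your version is more economical in collisions; the paper's version never needs the constant particle to ride $L$ through the congested early part of the diagram, because its carrier for the final conversion only has to occupy a short suffix of $L$. Two caveats on your side. First, be careful with the word ``synthesize'': $f$ conserves actual particles, so the constant $1$ on $L$ cannot be created by a collision of constants out of nothing; it must be a constant slow particle placed in the gadget and, if necessary, \emph{redirected} onto $L$ late (two constant fast particles flipping a slow-left $1$ to slow-right works), and that redirecting apparatus contributes its own lines to the crossing count --- this is workable and within your $O(1)$/$O(m_2)$ budget, but it is the mechanism you actually need, not a side-collision synthesis. Second, your claim that an interval of $T$-values of length $\Omega(t'-t)$ along $L$ lies in both cones can fail when $(j',t')$ sits on the left boundary of a cone; this is at the same level of informality as the paper's own ``for suitable values of $k_1, k_2, k$'' and is harmless in the corollary where the targets sit well inside the cones, but it is worth flagging. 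With those repairs the counting argument for Definition~\ref{def:Control} goes through exactly as you describe.
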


The condition $\beta_2 \wedge \beta_1 \not\equiv 1$ is there mainly for terminological reasons, ensuring that the output of the computation is actually a Boolean particle, and does not restrict the usefulness of the lemma, since the constant $0$ function is available for us anyway. Similarly to the fact that any number of particles can be moved, we can now compute an arbitrary Boolean function, given enough time and space.

\begin{proof}
We proceed as in the proof of Lemma~\ref{lem:Movement}, but skipping most of the details. For $i \in \{1, 2\}$, we will schedule a collision at the spacetime coordinate $(j_1+k_1, t+k_1)$ for some $k_1 \leq k_2 \in \N$, in which a slow left-moving and a fast right-moving particle collide with $p_i$, creating a fast left-moving Boolean particle with label $\beta_i$ and leaving $p_i$ intact. At time $t + k_2 + k$ for some $k \in \N$, we schedule a third collision that slows down the copy of $p_1$. After some time, the still fast copy of $p_2$ will catch it, and we schedule a fourth collision at this point that creates a fast right-moving Boolean particle with label $\beta_1 \wedge \beta_2$. A fifth collision will be scheduled at the spacetime position where this particle reaches the target line $L$, producing a slow right-moving Boolean particle with label $\neg (\beta_1 \wedge \beta_2)$.

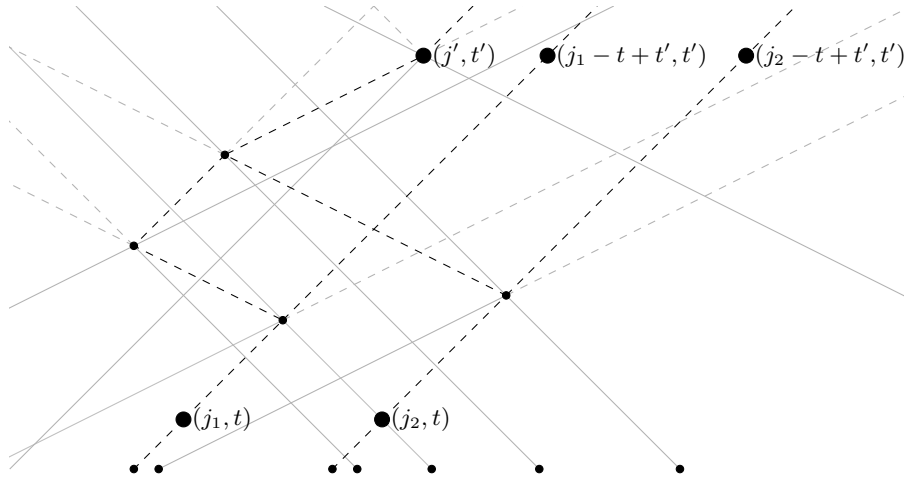
\begin{figure}[ht]
\begin{center}
\begin{tikzpicture}[scale=.33]

\def\ja{5} 
\def\jb{13} 
\def\jc{0} 
\def\ka{4} 
\def\kb{5} 
\def\k{3} 
\def\t{2} 
\def\tf{2} 

\def\kka{-4*\ja/3+\jb/3+\jc-\ka+\kb+4*\k+\tf} 
\def\kkb{-4*\ja/3+\jb/3+\jc+4*\k+\tf} 
\def\tp{-\t+4*\ja/3-\jb/3-\jc-\kb-4*\k} 
\def\tff{\t-4*\ja/3+\jb/3+\jc+\kb+4*\k+\tf} 

\clip (\jc-2, -.5) rectangle (\jc+34, \tff);

\coordinate (p1) at (\ja,\t);
\coordinate (p2) at (\jb,\t);
\coordinate (c1) at (\ja+\ka,\t+\ka);
\coordinate (c2) at (\jb+\kb,\t+\kb);
\coordinate (c3) at (\ja+\ka-2*\k,\t+\ka+\k);
\coordinate (c4) at (2*\ja/3+\jb/3+\kb-2*\k, \t-\ja/3+\jb/3+\kb+\k);
\coordinate (c5) at (-4*\ja/3+\jb/3+2*\jc+\kb+4*\k, \t-4*\ja/3+\jb/3+\jc+\kb+4*\k);
\coordinate (q1) at ($ (c5) + (\ja-\jc,0) $);
\coordinate (q2) at ($ (c5) + (\jb-\jc,0) $);

\coordinate (top1) at (0,\tff);
\coordinate (top2) at (1,\tff);

\coordinate (a11) at (\ja+2*\ka+\t,0);
\coordinate (a12) at (\ja-\ka-2*\t,0);
\coordinate (a21) at (\jb+2*\kb+\t,0);
\coordinate (a22) at (\jb-\kb-2*\t,0);
\coordinate (a31) at (\ja-\ka-4*\k-2*\t,0);
\coordinate (a32) at (\ja+2*\ka-\k+\t,0);
\coordinate (a4)  at (\ja/3+2*\jb/3+2*\kb-\k+\t,0);
\coordinate (a51) at (-8*\ja/3+2*\jb/3+3*\jc+2*\kb+8*\k+\t,0);
\coordinate (a52) at (-4*\ja+\jb+4*\jc+3*\kb+12*\k+2*\t,0);

\draw[black!25!white] (a11) -- (c1) -- (intersection of a11--c1 and top1--top2);
\draw[black!25!white] (a12) -- (c1);
\draw[black!30!white,dashed] (c1) -- (intersection of a12--c1 and top1--top2);
\draw[black!30!white] (a21) -- (c2) -- (intersection of a21--c2 and top1--top2);
\draw[black!30!white] (a22) -- (c2);
\draw[black!30!white,dashed] (c2) -- (intersection of a22--c2 and top1--top2);
\draw[black!30!white] (a31) -- (c3) -- (intersection of a31--c3 and top1--top2);
\draw[black!30!white] (a32) -- (c3);
\draw[black!30!white,dashed] (c3) -- (intersection of a32--c3 and top1--top2);
\draw[black!30!white,dashed] (c3) -- (intersection of c1--c3 and top1--top2);
\draw[black!30!white] (a4)  -- (c4) -- (intersection of a4--c4 and top1--top2);
\draw[black!30!white,dashed] (c4) -- (intersection of c2--c4 and top1--top2);
\draw[black!30!white,dashed] (c4) -- (intersection of c3--c4 and top1--top2);
\draw[black!30!white,dashed] (c5) -- (intersection of a51--c5 and top1--top2);
\draw[black!30!white] (a52) -- (c5) -- (intersection of a52--c5 and top1--top2);
\draw[black!30!white] (c5) -- ++(\tp,\tp);
\draw[black!30!white,dashed] (c5) -- (intersection of c4--c5 and top1--top2);

\draw[fill] (p1) circle (0.3);
\draw[fill] (p2) circle (0.3);
\draw[fill] (q1) circle (0.3);
\draw[fill] (q2) circle (0.3);
\draw[fill] (\ja-\t,0) circle (0.15);
\draw[fill] (\jb-\t,0) circle (0.15);
\draw[fill] (c1) circle (0.15);
\draw[fill] (c2) circle (0.15);
\draw[fill] (c3) circle (0.15);
\draw[fill] (c4) circle (0.15);
\draw[fill] (c5) circle (0.3);

\draw[dashed] (p1) -- (c1) -- (c3) -- (c4);
\draw[dashed] (p2) -- (c2) -- (c4) -- (c5);
\draw[dashed] (c1) -- ++(\kka,\kka);
\draw[dashed] (c2) -- ++(\kkb,\kkb);
\draw[dashed] (p1) -- ++(-\t,-\t);
\draw[dashed] (p2) -- ++(-\t,-\t);
\draw[dashed] (c5) -- ++(\tf,\tf);

\foreach \l in {a11,a12,a21,a22,a31,a32,a4,a52}{
	\draw[fill] (\l) circle (0.15);
}

\node [right] () at (p1) {$(j_1, t)$};
\node [right] () at (p2) {$(j_2, t)$};
\node [right] () at (q1) {$(j_1-t+t', t')$};
\node [right] () at (q2) {$(j_2-t+t', t')$};
\node [right] () at (c5) {$(j', t')$};

\end{tikzpicture}
\end{center}
\caption{Computing the NAND of two Boolean particles. The auxiliary Boolean particles are shown in gray for clarity.}
\label{fig:Computation}
\end{figure}

Figure~\ref{fig:Computation} shows the auxiliary particles needed for the collisions. It also shows that, for suitable values of $k_1$, $k_2$ and $k$, the only collisions resulting from the modification are the five desired ones. As in Lemma~\ref{lem:Movement}, there are then a bounded number of new occupied lines and $O(m_2)$ new crossings, and if $t'$ is large enough, we can find such values for the parameters that the conditions of Definition~\ref{def:Control} are satisfied.
\qed
\end{proof}

\begin{corollary}
\label{cor:Computation}
Suppose we have a spacetime diagram of interest with $m_1$ collisions and $m_2$ occupied lines, a coordinate $j \in \Z$, and Boolean particles $p_1, \ldots, p_{m_4}$ with labels $\beta_k \in \F$ at the spacetime positions $(j + k - 1, t, 1)$. Suppose further that the lines $L_{p_k}$ that the $p_k$ occupy do not contain collisions after time $t$. Let $H : \{0, 1\}^{m_4} \to \{0, 1\}^m$ with $m \leq m_4$ be a Boolean function realizable with $C$ NAND-gates such that $\pi_\ell \circ H \not\equiv 0$ for all $\ell = 1, \ldots, m$. Let $\Ls$ be a set of $m_3$ lines not containing the $L_{p_k}$, let $L_1, \ldots, L_{C + m} \notin \Ls$ be unoccupied lines of slope $1$ that pass through the spacetime coordinates $(j-1, t), \ldots, (j-C-m, t)$, and let $t' > t$. Let also $\Ts = \{L_C, \ldots, L_{C + m}\}$. If
\[ t' = t + \Omega((C + m_4)(C^2 + m_1 + m_3 + (C + m_4)(m_2 + m_4))) \]
is large enough, then there is an $(O(C + m_4), O((C + m_2 + m_4)(C + m_4)), \Ls, \Ts, t', \emptyset)$-controlled modification after which the spacetime positions $L_{C + \ell}^{t'}$ contain Boolean particles with labels $H(\beta_1, \ldots, \beta_m)_\ell$.
\end{corollary}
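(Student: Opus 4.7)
The plan is to simulate the NAND circuit gate by gate on the lines $L_1, \ldots, L_C$ by iterating Lemma~\ref{lem:Computation}, and then to move the $m$ resulting output Boolean particles onto the target lines $L_{C+1}, \ldots, L_{C+m}$ by a single final application of Corollary~\ref{cor:Movement}.

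First I would topologically sort the $C$ NAND gates as $g_1, \ldots, g_C$ so that the two inputs of $g_i$ are either among the original particles $p_k$ or among the outputs of earlier gates $g_j$ with $j < i$. Processing them in this order, at step $i$ I apply Lemma~\ref{lem:Computation} at some time $t_i > t_{i-1}$ to the two speed-$1$ Boolean particles representing the inputs of $g_i$, placing the NAND output on the line $L_i$. The protected set $\Ls_i$ consists of $\Ls \cup \{L_{i+1}, \ldots, L_{C+m}\}$ together with the input lines not used at the current step, and the target set $\Ts_i$ consists of the two input lines together with $L_i$. Putting the input lines into $\Ts_i$ rather than $\Ls_i$ is what both preserves the input particles (so they may feed subsequent gates, giving fan-out for free) and prevents any later disturbance of $L_i$. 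The geometric hypothesis of Lemma~\ref{lem:Computation} is easy to verify: every $L_i$ has slope $1$ and passes through $(j-i, t)$, hence lies strictly to the left of every input and every earlier-output line at every time, and for $t_i - t$ large enough its position at time $t_i$ falls in the intersection of the positive cones rooted at the two input positions.

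Inductively, after $i$ steps the number of occupied lines has grown by $O(i)$, the number of crossings by $O(i(m_2 + i))$, and the number of particles added to the gadget by $O(i)$; each increment $t_i - t_{i-1}$ required by Lemma~\ref{lem:Computation} is $\Omega(m_1 + m_3 + m_4 + C + i(m_2 + i))$, and telescoping these over $i \leq C$ together with a final invocation of Corollary~\ref{cor:Movement} that moves $m \leq m_4$ output particles from their current speed-$1$ lines among $\{L_1, \ldots, L_C\}$ onto $L_{C+1}, \ldots, L_{C+m}$ yields exactly the claimed time, added-particle, and crossing bounds. The main obstacle I expect is not any single step but the careful cumulative bookkeeping of the four parameters across $C+1$ inductive modifications, together with one secondary subtlety: Lemma~\ref{lem:Computation} requires $\beta_{a_i} \wedge \beta_{b_i} \not\equiv 1$, which forces one to preprocess the circuit so that no intermediate gate has an identically constant output (since a constant-$0$ intermediate value is not representable as a Boolean particle at all). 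Any such gate can be removed by constant propagation, with the identity $\mathrm{NAND}(x,1) = \mathrm{NAND}(x,x)$ handling gates that acquire a single constant input, and this rewriting does not increase $C$, so the preprocessing is cost-free.
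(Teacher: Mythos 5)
Your core construction coincides with the paper's: process the NAND gates in topological order, realizing each with Lemma~\ref{lem:Computation} on the next reserved slope-$1$ line, protecting the still-unused lines via $\Ls_i$ and exploiting the nondestructiveness of the lemma for fan-out, and telescoping the controlled-modification parameters exactly as you describe (the paper gets $m_2^{(i)} = O(i)$, $m_1^{(i)} = O(m_2 i + i^2)$, and the same per-step time increments). Where you genuinely diverge is the delivery of the outputs: the paper applies Lemma~\ref{lem:Computation} $m$ \emph{more} times, for $i = C+1, \ldots, C+m$, recomputing the NAND of the (still intact) input particles of each output gate directly onto $L_{C+\ell}$, whereas you move the already-computed output particles there with one application of Corollary~\ref{cor:Movement}. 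Your route is legitimate and fits within the stated bounds (movement of $m \leq m_4$ particles costs $O(m_4)$ particles, $O(m_4(C+m_2+m_4))$ crossings and time within the claimed $\Omega(\cdot)$), but note two caveats it introduces. First, movement cannot duplicate a particle: if two output coordinates of $H$ are realized by the same gate (or by the same input $\beta_k$), you have only one source particle for two target lines, so you must fall back on recomputation for the extra copies, which is precisely what the paper's $m$ extra NAND steps buy for free; conversely, your method handles an output that is literally an input variable more gracefully than the paper's. Second, your constant-propagation patch is a sensible addition (the paper silently defers this, admitting in the final remarks that constant-$0$ particles are ``handled separately''), but the rewriting $\mathrm{NAND}(x,1) = \mathrm{NAND}(x,x)$ does not directly fit Lemma~\ref{lem:Computation}, which requires two Boolean particles on \emph{distinct} lines; you would need either a spare copy of $x$ on another line (obtainable by one extra nondestructive NAND step) or an actual constant-$1$ particle placed in the gadget. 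Both caveats are repairable with $O(m)$ extra applications of Lemma~\ref{lem:Computation}, so they do not change the asymptotics, but they should be stated.
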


\begin{proof}
This proof is similar to that of Corollary~\ref{cor:Movement}. The $C$ auxiliary lines are used for the computation of $H$ one NAND-gate at a time. Let $t_0 = t$, $m_1^{(0)} = 0$ and $m_2^{(0)} = 0$.

For each $i = 1, \ldots, C + m$, we apply Lemma~\ref{lem:Computation} to some pair of particles $p_{i,1}$ and $p_{i,2}$ on some of the lines $L_{p_k}$ for $k \in \{1, \ldots, m_4\}$ or $L_k$ for $k < i$, and the target line $L_i$, denoting the set of these three lines by $\Ts_i$. We denote the number of new crossings after round $i-1$ of the calculation by $m_1^{(i-1)}$, and that of new occupied lines by $m_2^{(i-1)}$. We define $\Ls_i = \Ls \cup \{L_{i+1}, \ldots, L_{m+C}\}$, so that the latter lines will not accidentally become occupied, and then $|\Ls_i| = O(m_3) + C+m-i$. For large enough
\begin{equation}
\label{eq:tdiff}
t_i = t_{i-1} + \Omega(m_1 + m_1^{(i-1)} + m_2 + m_2^{(i-1)} + m_3 + C+m-i),
\end{equation}
there is an $(O(1), O(m_2 + m_2^{(i-1)}), \Ls_i, \Ts, t_i, \emptyset)$-controlled modification of the gadget that causes the NAND of the particles $p_{i,1}$ and $p_{i,2}$ to appear on the line $L_i$ at time $t_i$.

When all the particles are in their places, it remains to bound the number of new particles, occupied lines and crossings in the spacetime diagram of interest, and the final time $t' = t_{C + m}$. First, the number of particles and occupied lines increases by a constant each round, so we have $m_2^{(i)} = O(i)$. The number of crossings increases by $m_1^{(i)} - m_1^{(i - 1)} = O(m_2 + m_2^{(i-1)}) = O(m_2 + i)$ on round $i$, since each new occupied line crosses each already occupied line at most once, so we have $m_1^{(i)} = O(m_2 i + i^2)$. In particular, the total number of new crossings is $m_1^{(C + m)} = O((C + m_2 + m_4)(C + m_4))$. Any large enough time difference $t_i - t_{i-1} = \Omega(m_1 + m_2 i + i^2 + m_3 + C + m_4)$ suffices (where we have plugged the values of $m_1^i$ and $m_2^i$ into \eqref{eq:tdiff} and simplified a bit), so we can choose
\[ t_{C + m} - t_0 = \Omega((C + m_4)(C^2 + m_1 + m_3 + (C + m_4)(m_2 + m_4))), \]
which finishes the proof.
\qed
\end{proof}

\section{Physical Universality}
\label{sec:AllTogether}

We can now prove our main result, the effective physical universality of $f$.

\begin{theorem}
\label{thm:Main}
The cellular automaton $f$ is effectively physically universal.
\end{theorem}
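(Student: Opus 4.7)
The plan is to build the gadget in four phases. Take $D = E = [0, n-1]$ without loss of generality, and begin with the empty gadget. In \emph{Phase~1 (Diffuse)}, by Lemma~\ref{lem:Diffusion}, after $t_0 = O(n)$ steps we have $N = O(n)$ Boolean particles confined to an $O(n)$-wide band, with no future collisions, whose labels $\gamma_1, \ldots, \gamma_N$ are Boolean formulas in the input variables $\alpha_i$. Since $\hat f$ is reversible and each cell update acts on a bit valuation as a fixed reversible permutation of $4$ bits, running the dynamics backward yields a Boolean circuit of size $O(n^2)$ inverting $\alpha \mapsto \gamma$; hence the inverse $g : \{0,1\}^N \to \{0,1\}^{4n}$ has NAND-complexity $\mathrm{poly}(n)$.

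In \emph{Phase~2 (Gather)}, I apply Corollary~\ref{cor:Movement} to move these $N$ scattered Boolean particles onto $N$ consecutive parallel slope-$1$ lines at some time $t_1 = t_0 + \mathrm{poly}(n)$, with the target lines taken sufficiently far in the future to be unoccupied. In \emph{Phase~3 (Compute)}, the composition $H = h \circ g : \{0,1\}^N \to \{0,1\}^{4n}$ has NAND-complexity $C + \mathrm{poly}(n)$, and Corollary~\ref{cor:Computation} (applied iteratively if more outputs are required than inputs) produces at some time $t_2 = t_1 + \mathrm{poly}(n, C)$ a row of Boolean particles on slope-$1$ lines carrying the non-identically-zero output bits of $H$; identically-$1$ bits are absorbed into $H$ with a fresh constant input, and identically-$0$ bits are simply omitted since the default in the empty gadget is the absence of a particle.

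In \emph{Phase~4 (Distribute)}, I apply Corollary~\ref{cor:Movement} once more to move each computed output particle onto the line of the appropriate speed $s \in \D$ through the target position $(i, t, s)$ for $i \in [0, n-1]$, at a final time $t = t_2 + \mathrm{poly}(n, C)$ chosen large enough for all those target lines to be unoccupied and to lie in the future cones of the computed particles. Lemma~\ref{lem:LogicRep} then yields $f^t(v_P(x))_{[0, n-1]} = v_P(\hat f^t(x))_{[0, n-1]} = h(P)$ for every input pattern $P$, which is physical universality; since $t = \mathrm{poly}(n, C)$ and the gadget is explicitly constructible in polynomial time, the universality is effective.

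The main obstacle will be the cross-phase bookkeeping: in each invocation of Corollary~\ref{cor:Movement} or~\ref{cor:Computation} I must include in the protected set $\Ls$ every line that a later phase will manipulate (the gather-target lines during Phase~1, the scratch lines of the computation during Phase~2, and the final output lines during Phase~3), and I must verify that the counts $m_1, m_2, m_3, m_4$ remain polynomial throughout so that the time bounds accumulate to $\mathrm{poly}(n, C)$. A secondary subtlety is the polynomial bound on $g$, which reduces, as noted, to the observation that $\hat f$ acts on bit valuations by a composition of $O(n^2)$ constant-size reversible gates, so its inverse has the same circuit size up to a constant factor.
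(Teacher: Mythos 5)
There is a genuine gap in Phase~4. You propose to move the computed output bits directly onto the straight lines of the appropriate speeds through the final positions $(i, t, s)$, $i \in [0, n-1]$, $s \in \D$. This fails for two reasons. First, these target lines are not mutually disjoint and are all crammed into an $n$-cell window: lines of different speeds through nearby output cells cross each other shortly before time $t$, so the hypotheses of Corollary~\ref{cor:Movement} (unoccupied, pairwise disjoint targets up to time $t'$) are violated, and the delivery collisions, which must be scheduled on the target lines close to the packed output region, spray auxiliary Boolean particles that cannot all be kept out of $[0,n-1] \times \{t\} \times \D$. Second, and more fundamentally, even if you could place the particles, they would not arrive carrying $h(P)$: the output pattern is arbitrary, and particles converging on it along straight lines generically collide with \emph{each other} before time $t$. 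For instance, if $h(P)$ has a fast-left particle in cell $0$, a fast-right particle in cell $4k$, and a slow-right particle in cell $3k$, then the straight-line backward trajectories put both fast particles and the slow one in cell $2k$ at time $t-k$; the collision rule then reverses the slow particle, so the pattern actually reaching $[0,n-1]$ at time $t$ differs from $h(P)$. So ``the default is the absence of a particle'' is not enough; the interactions among the delivered particles themselves during the last $O(n)$ steps must be accounted for.

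The paper closes exactly this gap by applying Lemma~\ref{lem:Diffusion} a second time, in the reverse time direction, to the \emph{output} pattern: it computes (using the reversibility of $\hat f$, at an extra cost of $O(n^2)$ NAND gates folded into the circuit $H'$) the labels $\delta_{n_1},\ldots,\delta_{n_k}$ of a \emph{dispersed pre-image} of $h(P)$, whose particles sit on pairwise non-intersecting lines spread over an interval $I$ of length $O(n)$. The Assembly stage then uses Corollary~\ref{cor:Movement} (in its weak form, with the interval $I$ in the forbidden set of Definition~\ref{def:Control}) to place these dispersed particles at time $t_{\mathrm{ass}}$ while keeping the rest of $I \times \{t_{\mathrm{ass}}\} \times \D$ empty, and a final Reverse Diffusion stage of length $O(n)$ lets the automaton's own dynamics reassemble $h(P)$ on $[0,n-1]$. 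Your Phases~1--3 (diffusion, gathering, computation, including the polynomial bound on the inverse-diffusion circuit and the treatment of constant outputs) match the paper's argument; to repair Phase~4 you must compute the reverse-diffused output labels rather than the raw output bits, and assemble that dispersed pattern at an earlier time instead of delivering directly into the output window.
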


\begin{proof}
We prove the result by a construction that uses the results from the previous sections. The reader should use Figure~\ref{fig:FinalProof} as a guide. Without loss of generality, let $D = E = [0, n-1]$, and let $h : A^n \to A^n$ be any function. We can view $h$ as a Boolean function $H : \{0, 1\}^{4n} \to \{0, 1\}^{4n}$ on arrangements of particles in the domain $D$. We set the initial configuration of interest to be $x \in \hat A^\Z$, with the fully general pattern containing the $4n$ Boolean particles with labels $\alpha_1, \ldots, \alpha_{4n} \in \mathcal{V}$ in the input, and 0 in the gadget. Our goal is to modify the gadget so that for some time $t_{\mathrm{final}} > 0$ polynomial in $n$ and the circuit complexity $C_H$ of $H$, we have $\hat f^{t_{\mathrm{final}}}(x)_i = H(\alpha_1, \ldots, \alpha_{4n})_{[4i, 4i+3]}$ for all $i \in [0, n-1]$. Lemma~\ref{lem:LogicRep} then gives the claim.

In the construction, we apply Lemma~\ref{lem:Diffusion} twice, first to the input pattern in the forward direction, to transform it to a dispersed pattern with $O(n)$ Boolean particles labeled $\beta_1, \ldots, \beta_m \in \F$, and then to the output pattern in the reverse direction, to find a dispersed pattern with $O(n)$ Boolean particles labeled $\delta_1, \ldots, \delta_{m'} \in \F$ at positions $(j_1, s_1), \ldots, (j_{m'}, s_{m'}) \in \Z \times \D$ that $\hat f$ transforms into the output. The function $H' : \{0, 1\}^m \to \{0, 1\}^k$ that we actually compute maps $\beta_1, \ldots, \beta_m$ to those elements of $\delta_1, \ldots, \delta_{m'}$ that are not identically zero, denoted $\delta_{n_1}, \ldots, \delta_{n_k}$, which is possible since the logical CA $\hat f$ is reversible. The function $H'$ has circuit complexity $C \leq C_H + O(n^2)$, since it can be computed by simulating in reverse the $O(n^2)$ collisions that produce $\beta_1, \ldots, \beta_m$ from the input, then the function $H$, and finally in reverse the $O(n^2)$ collisions that produce the output from $\delta_{n_1}, \ldots, \delta_{n_k}$.

The construction happens in five stages, scheduled at times $0 < t_{\mathrm{dis}} < t_{\mathrm{coll}} < t_{\mathrm{comp}} < t_{\mathrm{ass}} < t_{\mathrm{final}}$ with large enough differences. As stated above, we will produce the dispersed pattern $\delta_{n_1}, \ldots, \delta_{n_k}$ at time $t_{\mathrm{ass}}$, and $t_{\mathrm{final}} = t_{\mathrm{ass}} + O(n)$ is the time when this pattern has evolved into the desired output. Let $I = [-2(t_{\mathrm{final}} - t_{\mathrm{ass}}), n'/4 + 2(t_{\mathrm{final}} - t_{\mathrm{ass}})]$ be the interval containing the dispersed output pattern, and let $\Ls'_{\mathrm{ass}}$ be the size-$O(n)$ set of all lines that pass through a coordinate of $I$ at time $t_{\mathrm{ass}}$.

\paragraph{Diffusion Stage.} 
In the Diffusion Stage, we simply apply Lemma~\ref{lem:Diffusion} to the initial configuration of interest. It states that for some time $t_{\mathrm{dis}} = O(n)$, there are no crossings or collisions in the time interval $[t_{\mathrm{dis}}, \infty)$, and that the initial spacetime diagram of interest contains $O(n)$ occupied lines and $O(n^2)$ crossings. At time $t_{\mathrm{dis}}$, we have the aforementioned pattern of $O(n)$ Boolean particles, labeled $\beta_1, \ldots, \beta_{n'} \in \F$, moving away from the origin at different speeds.

\paragraph{Collection Stage.} 
We now redirect the dispersed Boolean particles into a slow right-moving formation. For this, let $t_{\mathrm{coll}} = t_{\mathrm{dis}} + \Theta(C + n)$, and for $i \in \Z$, denote by $L_i =  L(i, t_{\mathrm{coll}}, 1)$ the line passing through position $(i,1)$ at time $t_{\mathrm{coll}}$. Let $\Ls_{\mathrm{coll}} = \{ L_{-i} \;|\; i = 1, \ldots, C + k \}$ be the $C + k$ lines reserved for computation, and let $\Ts_{\mathrm{coll}} = \{ L_i \;|\; i = 0, \ldots, m-1 \}$ be the target lines for the inputs of the computation. Since there are $O(n)$ occupied lines and $O(n^2)$ crossings in the spacetime diagram of interest, Corollary~\ref{cor:Movement} shows that there is an $(O(n), O(n^2), \Ls_{\mathrm{coll}} \cup \Ls'_{\mathrm{ass}}, \Ts_{\mathrm{coll}}, t_{\mathrm{coll}}, \emptyset)$-controlled modification that moves the Boolean particles onto the lines $L_1, \ldots, L_m$ at a large enough time $t_{\mathrm{coll}}$.

\paragraph{Computation Stage.} 
In this stage, we compute the labels $\delta_{n_1}, \ldots, \delta_{n_k}$ by simulating the circuit representation of $H'$. For that, let $t_{\mathrm{comp}} = t_{\mathrm{coll}} + \Theta((C + n)^3)$ and $\Ts_{\mathrm{comp}} = \{ L_{-C-i} \;|\; i = 1, \ldots, k \}$. Corollary~\ref{cor:Computation} shows that there is an $(O(C + n), O((C + n)^2), \Ls'_{\mathrm{ass}}, \Ts_{\mathrm{comp}}, t_{\mathrm{comp}}, \emptyset)$-controlled modification which results in $k$ Boolean particles with labels $\delta_{n_1}, \ldots, \delta_{n_k}$ to be placed on the lines $L_{-C-1}, \ldots, L_{-C-k}$.

\paragraph{Assembly Stage.} 
In the Assembly Stage, we assemble the dispersed output pattern in such a way that no auxiliary particle affects its evolution before the final time $t_{\mathrm{final}}$. For that, let $t_{\mathrm{ass}} = t_{\mathrm{comp}} + \Theta((C + n)^2)$, and let $\Ts_{\mathrm{ass}} = \{ L(j_{n_i}, t_{\mathrm{ass}}, s_{n_i}) \;|\; i = 1, \ldots, k \}$ be the target lines that pass through the positions of the Boolean particles in the dispersed output pattern at time $t_{\mathrm{ass}}$. From Lemma~\ref{lem:Diffusion} we know that the lines in $\Ts_{\mathrm{ass}}$ do not intersect in the time interval $(-\infty, t_{\mathrm{ass}}]$. We can then apply Corollary~\ref{cor:Movement} again to obtain a weakly $(O(n), O(n(C + n)), \emptyset, \Ts_{\mathrm{ass}}, t_{\mathrm{ass}}, I)$-controlled modification that places a Boolean particle with label $\delta_{n_i}$ on each spacetime position $(j_{n_i}, t_{\mathrm{ass}}, s_{n_i})$ while keeping the other spacetime positions in $I \times \{t_{\mathrm{ass}}\} \times \D$ empty.

\paragraph{Reverse Diffusion Stage.} 
Here we no longer need to do any modifications. Since we explicitly forbade any auxiliary Boolean particles from entering the spacetime segment $I \times \{ t_{\mathrm{ass}} \}$, the dispersed output pattern on it will evolve in $t_{\mathrm{final}} - t_{\mathrm{ass}} = O(n)$ steps into the final output pattern encoding $H(\alpha_1, \ldots, \alpha_{4n})$.
\qed
\end{proof}

\begin{figure}[ht]
\begin{center}
\begin{tikzpicture}[yscale=.9]


\draw[ultra thick] (0,0) -- (1,0);
\draw[ultra thick] (-2,1) -- (3,1);
\node (a) at (.5,0) {};
\foreach \l/\x in {b1/-1.5,b2/-.5,b3/1.5,b4/2.5}{
	\coordinate (\l) at (\x,.8) {};
	\draw[dotted,->] (a) -- (\l);
}

\draw[decorate,decoration={brace,amplitude=5}] (-2,1.2) -- (3,1.2);
\draw[decorate,decoration={brace,amplitude=5}] (1.5,2.8) -- (0,2.8);
\draw[densely dotted,->] (.5,1.5) -- (.5,1.7) -- (0,1.8) -- (1.25,2.2) -- (.75,2.3) -- (.75,2.5);

\foreach \x in {0,.2,...,1.5}{
	\fill (\x,3) circle (0.05);
	\draw (\x,3) -- ++(1,4);
}
\foreach \sa/\sb/\ta/\ti in {
		1/2/0/.25,
		0/2/1/.75,
		3/5/2/1,
		-1/4/3/1.5,
		-3/1/4/2,
		-3/-4/5/2.5,
		6/-4/6/2.75,
		-4/4/7/3.25}{
	\draw (\sa/5+\ti/4,\ti+3) -- (-\ta/5-1/5+\ti/4+1/8,\ti+7/2);
	\draw (\sb/5+\ti/4,\ti+3) -- (-\ta/5-1/5+\ti/4+1/8,\ti+7/2);
	\fill (-\ta/5-1/5+\ti/4+1/8,\ti+7/2) circle (0.05);
	\draw[dashed] (-\ta/5-1/5,3) -- (-\ta/5-1/5+\ti/4+1/8,\ti+7/2);
	\draw (-\ta/5-1/5+\ti/4+1/8,\ti+7/2) -- (-\ta/5-1/5+1,7);
}

\draw[decorate,decoration={brace,amplitude=5}] (-.6,7.2) -- (.2,7.2);
\draw[decorate,decoration={brace,amplitude=5}] (3,8.8) -- (-2,8.8);
\draw[densely dotted,->] (-.2,7.5) -- (-.2,7.7) -- (-.7,7.8) -- (1,8.2) -- (.5,8.3) -- (.5,8.5);

\draw[ultra thick] (0,10) -- (1,10);
\draw[ultra thick] (-2,9) -- (3,9);
\node (d) at (.5,10) {};
\foreach \l/\x in {c1/-1.5,c2/-.5,c3/1.5,c4/2.5}{
	\coordinate (\l) at (\x,9.2) {};
	\draw[>-,dotted] (\l) -- (d);
}

\draw[thick,decorate,decoration={brace,amplitude=5,mirror}] (4,0) -- (4,.9);
\node [right] at (4.2,.45) {Diffusion};
\draw[thick,decorate,decoration={brace,amplitude=5,mirror}] (4,1.1) -- (4,2.9);
\node [right] at (4.2,2) {Collection};
\draw[thick,decorate,decoration={brace,amplitude=5,mirror}] (4,3.1) -- (4,6.9);
\node [right] at (4.2,5) {Computation};
\draw[thick,decorate,decoration={brace,amplitude=5,mirror}] (4,7.1) -- (4,8.9);
\node [right] at (4.2,8) {Assembly};
\draw[thick,decorate,decoration={brace,amplitude=5,mirror}] (4,9.1) -- (4,10);
\node [right] at (4.2,9.7) {Reverse};
\node [right] at (4.2,9.3) {Diffusion};

\draw (-3,0) -- ++(0,10);
\draw[thick] (-3,0) -- ++(.25,0);
\node[left] at (-3,0) {$0$};
\draw[thick] (-3,1) -- ++(.25,0);
\node[left] at (-3,1) {$t_{\mathrm{dis}}$};
\draw[thick] (-3,3) -- ++(.25,0);
\node[left] at (-3,3) {$t_{\mathrm{coll}}$};
\draw[thick] (-3,7) -- ++(.25,0);
\node[left] at (-3,7) {$t_{\mathrm{comp}}$};
\draw[thick] (-3,9) -- ++(.25,0);
\node[left] at (-3,9) {$t_{\mathrm{ass}}$};
\draw[thick] (-3,10) -- ++(.25,0);
\node[left] at (-3,10) {$t_{\mathrm{final}}$};

\end{tikzpicture}
\end{center}
\caption{A schematic diagram of the proof of Theorem~\ref{thm:Main}, not drawn to scale.}
\label{fig:FinalProof}
\end{figure}
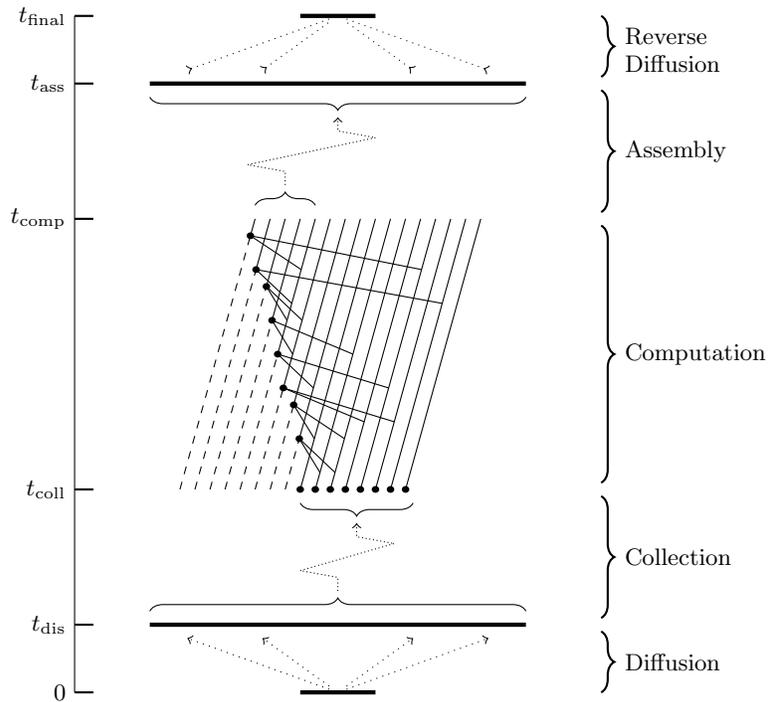

\section{Final remarks}
\label{sec:Final}

Our proof of the physical universality of $f$ can readily be turned into a polynomial time algorithm that, given a circuit computing the function $h : A^D \to A^E$ in the definition of physical universality, computes the corresponding gadget and the polynomially bounded number $t_{\mathrm{final}}$. However, this algorithm will need polynomial space, as it compares the new positions of auxiliary particles to all existing ones. In fact, for technical reasons that could be easily avoided, namely due to our choice of handling constant-0 particles separately, constructing the gadget is P-complete. To construct the gadget in logarithmic space, it might be necessary to fix particular choices of where the auxiliary particles are put. We have chosen the more abstract route in the hope that our methods generalize more directly to a larger class of CA.

The existence of a physically universal CA was asked in \cite{Ja10} without fixing the number of states. Our CA has 16 states and radius 2. It would be interesting to find the minimal number of states and the minimal radii for physically universal CA. Of course, one can make any physically universal CA have radius 1 by passing to a blocking presentation, but this increases the number of states. From our CA, one obtains a physically universal radius-1 CA with 256 states.

\begin{question}
Are there physically universal CA on the binary alphabet? Which alphabet-radius pairs allow physical universality?
\end{question}

A long list of open questions about physical universality is also given in \cite{Sc14}.


\section*{Acknowledgments}

We are thankful to Charalampos Zinoviadis for introducing this problem to us, and for many fruitful discussions on the proof, and Luke Schaeffer for his Golly implementation of our physically universal CA. We would also like to thank Scott Aaronson for popularizing the concept in his blog \cite{Blog}.

\bibliographystyle{plain}
\bibliography{PhysBib}{}

\end{document}